\def\cA{{\mathcal{A}}}
\def\cE{{\mathcal{E}}}
\def\cH{{\mathcal{H}}}
\def\cJ{{\mathcal{J}}}
\def\cP{{\mathcal{P}}}
\def\cV{{\mathcal{V}}}
\def\ba{{\mathbf{a}}}
\def\bb{{\mathbf{b}}}
\def\bc{{\mathbf{c}}}
\def\bee{{\mathbf{e}}}
\def\bg{{\mathbf{g}}}
\def\bmm{{\mathbf{m}}}
\def\bp{{\mathbf{p}}}
\def\br{{\mathbf{r}}}
\def\bs{{\mathbf{s}}}
\def\bu{{\mathbf{u}}}
\def\bv{{\mathbf{v}}}
\def\bw{{\mathbf{w}}}
\def\bA{{\mathbf{A}}}
\def\bC{{\mathbf{C}}}
\def\bE{{\mathbf{E}}}
\def\bI{{\mathbf{I}}}
\def\bM{{\mathbf{M}}}
\def\bQ{{\mathbf{Q}}}
\def\bR{{\mathbf{R}}}
\def\bS{{\mathbf{S}}}
\def\bT{{\mathbf{T}}}
\def\bU{{\mathbf{U}}}
\def\bV{{\mathbf{V}}}
\def\bX{{\mathbf{X}}}
\DeclareMathOperator{\diag}{diag}
\DeclareMathOperator{\blkdiag}{blkdiag}
\DeclareMathOperator{\T}{\sf T}
\DeclareMathOperator*{\argmax}{arg\!\max}
\def\R{{\mathbb{R}}}
\def\C{{\mathbb{C}}}
\def\cj{{\sf j}}
\def\Re{{\text{Re}}}
\def\Im{{\text{Im}}}
\def\PD{{\sf PD}}
\def\Pcon{{\cP_{\sf con}}}
\newcommand{\cbE}{\boldsymbol{\mathcal{E}}}
\newcommand{\cbA}{\boldsymbol{\mathcal{A}}}
\def\Tff{\bT_{\sf ff}}
\def\Tarrff{\overline{\bT}_{\sf arr, ff}}
\def\Tbar{\overline{\bT}}
\def\Rbar{\overline{\bR}}
\def\mbar{\overline{\bmm}}
\def\Mbar{\overline{\bM}}
\def\Kbar{\overline{K}}
\def\Rarr{\overline{\bR}_{\sf arr}}
\def\Tarr{\overline{\bT}_{\sf arr}}
\def\bur{{\bu_{r}}}
\def\bue{{\bu_{\theta}}}
\def\bua{{\bu_{\phi}}}
\def\Edipr{{\cE_{{\sf dip}, r}}}
\def\Edipe{{\cE_{{\sf dip}, \theta}}}
\def\Edipa{{\cE_{{\sf dip}, \phi}}}
\def\bEdip{{\cbE_{\sf dip}}}
\def\Earr{{\cbE_{\sf arr}}}
\def\Earrff{{\cbE_{\sf arr, ff}}}
\def\Aarr{{\bA_{\sf arr}}}
\def\Ex{{\cE_{\sf x}}}
\def\Ey{{\cE_{\sf y}}}
\def\Ez{{\cE_{\sf z}}}
\def\bptilde{{\widetilde{\bp}}}
\def\bEcart{{\cE_{\sf cart}}}
\def\arad{{\alpha_{\sf rad}}}
\def\aang{{\alpha_{\sf ang}}}
\def\aangff{{\alpha_{\sf ang, ff}}}
\def\e0{{\epsilon_0}}
\def\iso{{\sf iso}}
\newtheorem{thm}{Theorem}
\newtheorem{prop}[thm]{Lemma}
\definecolor{redwood}{rgb}{0.67, 0.31, 0.32}
\begin{document}

\title{Electromagnetic manifold characterization of antenna arrays}

\author{Miguel R. Castellanos, \emph{Member, IEEE} and Robert W. Heath, Jr., \emph{Fellow, IEEE}
  
  \thanks{This material is based upon work supported by the National Science Foundation under grants NSF-ECCS-2153698, NSF-CCF-2225555, NSF-CNS-2147955, in part by funds from federal agency and industry partners as specified in the Resilient $\&$ Intelligent NextG Systems (RINGS) program, and in part by Samsung and the affiliates of the NSF Broadband Wireless Access Center (BWAC) I/UCRC Center award NSF-CNS-1916766.

    M. R. Castellanos, R. W. Heath are with the Department of Electrical and Computer Engineering, North Carolina State University, Raleigh, NC 27606 USA (e-mail: mrcastel@ncsu.edu; rwheathjr@ncsu.edu).}}

\maketitle 

\begin{abstract}
  Antenna behaviors such as mutual coupling, near-field propagation, and polarization cannot be neglected in signal and channel models for wireless communication. We present an electromagnetic-based array manifold that accounts for several complicated behaviors and can model arbitrary antenna configurations. We quantize antennas into a large number of Hertzian dipoles to develop a model for the radiated array field. The resulting abstraction provides a means to predict the electric field for general non-homogeneous array geometries through a linear model that depends on the point source location, the position of each Hertzian dipole, and a set of coefficients obtained from electromagnetic simulation. We then leverage this model to formulate a beamforming gain optimization that can be adapted to account for polarization of the receive field as well as constraints on the radiated power density. Numerical results demonstrate that the proposed method achieves accuracy that is close to that of electromagnetic simulations. By leveraging the developed array manifold for beamforming, systems can achieve higher beamforming gains compared to beamforming with less accurate models.
\end{abstract}

\begin{IEEEkeywords}
  Array manifold, Hertzian dipole, mutual coupling, polarization, near-field
\end{IEEEkeywords} 

\section{Introduction}
\label{sec:introduction}

\subsection{Motivation}
Antenna arrays have evolved in past decades to enable advanced methods for wireless communication. Present day devices are typically equipped with broadband multi-antenna transceivers that enable MIMO techniques, like spatial multiplexing and precoding \cite{BjoernsonEtAlMassiveMimoIsReality2019}, and wideband communication for increased peak data rates \cite{YuanEtAlUltraWidebandMimoAntenna2020, ZhangEtAlUltraWideband8Port2019}. New antenna designs, such as dynamic metasurface antennas (DMAs) \cite{ShlezingerEtAlDynamicMetasurfaceAntennas6g2021} and liquid metal antennas \cite{BhaMaDic:RESHAPE:-A-Liquid-Metal-Based:21}, can be reconfigured to adapt to the wireless environment. The progress of wireless communication is intricately intertwined with antennas and their fundamental effect on wireless propagation.

Antenna array characteristics can be incorporated into the wireless channel to facilitate system performance analyzes and simulations. Early work on MIMO with accounted for coupling and gain patterns gains through analytical channel models with correlated fading \cite{ClerckxEtAlImpactAntennaCoupling2007,OzdemirEtAlDynamicsSpatialCorrelationAnd2004,IvrlacEtAlFadingCorrelationsWirelessMimo2003,WallaceJensenModelingIndoorMimoWireless2002}. These stochastic channel models are effective in modeling ergodic behavior and have the advantage of analytical tractability and simplicity.
Examples include the Rayleigh-fading model, the Kronecker model, and the Unitary-Independent-Unitary (UIU) channel models, all of which can be adapted or simplified as needed for analysis \cite{HeathLozanoFoundationsMIMOCommunications2018}. Parametric and deterministic models use physical modeling, site-specific measurements obtained, and ray tracing \cite{LeeEtAl28GhzMillimeterWave2018, NgEtAlEfficientMultielementRayTracing2007} to describe wireless propagation. While generally more accurate, parametric approaches are only as effective as the underlying model.  As an example, the double-directional cluster-based model in \cite{BhaOesHea:A-New-Double-Directional-Channel-Model:10} assumes that mutual coupling can be treated equally for each cluster. In reality, mutual coupling has a directional dependence due to how antenna gain patterns change when embedded into an array \cite{FriedlanderMutualCouplingMatrixArray2020}. 
In both analytical and parametric models, however, it becomes difficult to separate the antenna behavior from the wireless channel, which complicates the study of antenna effects on MIMO systems.

In this paper, we leverage electromagnetic-based antenna models incorporating fundamental physical principles to characterize the behavior of arbitrary arrays. Prior work in this context has used these types of models to analyze communication systems with large surfaces. The achievable link gain and spatial degrees-of-freedom of large intelligent surfaces were analyzed in \cite{DardariCommunicatingWithLargeIntelligent2020} through the eigenmodes of current sheets. The near-field model for communication between a large planar array and a single antenna in \cite{ BjornsonSanguinettiPowerScalingLawsAnd2020} accounted for near-field propagation and polarization mismatches. A Fourier plane-wave expansion was used in \cite{PizzoEtAlFourierPlaneWaveSeries2022, PizzoEtAlSpatiallyStationaryModelHolographic2020} to develop stochastic electromagnetic models for communication with large holographic arrays. While these studies have established fundamental behaviors of near-field communication with large surfaces, our paper addresses models for analyzing general antenna arrays. In the latter context, prior work has leveraged the concept of an extended manifold to study antennas \cite{FriedlanderExtendedManifoldAntennaArrays2020}. The extended manifold models antenna arrays as a large collection of Hertzian dipoles and has been leveraged for high-resolution direction finding \cite{FriedlanderAntennaArrayManifoldsHigh2018}, polarization sensitivity analysis \cite{FriedlanderPolarizationSensitivityAntennaArrays2019}, and near-field localization \cite{FriedlanderLocalizationSignalsNearField2019}. In this paper, we develop a manifold model that simultaneously captures near-field behavior, polarization, and mutual coupling. The model achieves better performance than prior work in approximating field strength and also incorporates how the transmit signal affects the array pattern. We also demonstrate show its application for beamforming with arbitrary antennas arrays.

\subsection{Contributions}

In this paper, we model realistic antenna arrays from an electromagnetic perspective to obtain an array manifold that maps the array beamforming weights to the electric or magnetic field at any point in space. We model each antenna by discretizing it into a large number constant current segments, each of which is treated as a Hertzian dipole. We leverage the closed-form expressions of a Hertzian dipole field to calculate the field radiated by each antenna as a linear combination of fields. To extend this to a multi-antenna array, we apply a similar principle but account for mutual coupling by determining the array current distribution when a single antenna is excited. The proposed model differs from that in \cite{FriedlanderExtendedManifoldAntennaArrays2020, FriedlanderMutualCouplingMatrixArray2020,DardariCommunicatingWithLargeIntelligent2020,FriedlanderPolarizationSensitivityAntennaArrays2019} by accounting for the near-field effects of the Hertzian dipoles, radial field components, and polarization mismatches between the different dipoles. The proposed model yields an \emph{electromagnetic array manifold} that can be leveraged for field and power density (PD) modeling. Accurate array modeling is important not only in the far-field for calculating array gains but also in the near-field for properly constraining the radiated power. Our validation results demonstrate that the proposed model effectively predicts the electric field for heterogeneous arrays. We also show an improvement over the isotropic manifold, the embedded manifold, and the extended manifold model in  \cite{FriedlanderExtendedManifoldAntennaArrays2020}.

We leverage the electromagnetic array manifold to formulate a beam pattern synthesis optimization. The proposed model allows the system to form patterns that focus energy in the near-field or steer beams in a given direction. Since the model gives the 3D components of the electric field, the optimization can also be adjusted to account for the receive polarization. An exposure constraint over an arbitrary region can also be defined in terms of radiated PD using the electromagnetic manifold. The solutions of the beamforming problem elucidate the meaning of the singular value decomposition (SVD) of the array manifold matrix. We show the left singular vectors correspond to the intrinsic polarization of the array and the right singular are associated with the uncoupled transmission modes of the antennas. The proposed manifold results in a better approximation of the measured beamforming gain. As validated in our numerical results, optimizing the transmit signal over the electromagnetic manifold therefore provides significant gains compared to the isotropic model.

\subsection{Connections to prior work}

A large portion of prior work on array beamforming has focused on isotropic radiators or on simple models for how the antenna affects the beam shape. MIMO systems leverage beamforming and precoding to maximize the system spectral efficiency. In single-stream communication, optimal beamforming entails maximizing the received power. A number of array processing and wireless studies analyze beamforming from the perspective of isotropic radiators \cite{LebretBoydAntennaArrayPatternSynthesis1997,CaoEtAlConstantModulusShapedBeam2017,AlkhateebEtAlChannelEstimationAndHybrid2014,LoveEtAlGrassmannianBeamformingMultipleInput2003,RaghavanEtAlSystematicCodebookDesignsQuantized2007}. Isotropic radiators are a reasonable assumption when the antenna pattern is nearly omnidirectional or when the antenna gain pattern is included in the channel model. In the latter case, the beam patterns are design with the antenna effects in mind, even if not explicitly considered. Modern approaches to beamforming also include some of the antenna characteristics, including antenna element patterns \cite{GemechuEtAlBeampatternSynthesisWithSidelobe2020,CuiEtAlEffectiveArtificialNeuralNetwork2021}, mutual coupling \cite{ZhangSerRobustBeampatternSynthesisAntenna2011, SchmidEtAlEffectsCalibrationErrorsAnd2013}, antenna polarization \cite{XiaoNehoraiOptimalPolarizedBeampatternSynthesis2009, FuchsFuchsOptimalPolarizationSynthesisArbitrary2011}, and near-field effects \cite{MyersEtAlNearFieldFocusingUsing2022}. By modeling all of the effects at once, MIMO systems can leverage the full capabilities of advanced antenna structures \cite{CastellanosHeathLinearPolarizationOptimizationWideband2023, ShyianovEtAlAchievableRateWithAntenna2022,JensenWallaceCapacityContinuousSpaceElectromagnetic2008}. The proposed model accurately incorporates antenna patterns, mutual coupling, and polarization in a way that is tractable for near-field and far-field beamforming with arbitrary antennas. 

User electromagnetic exposure levels are affected by the near-field behavior of antennas. All wireless devices are heavily regulated to ensure that consumers are not subject to radiation levels above safety thresholds. Prior work has shown that user exposure constraints can deteriorate the beamforming gain and achievable rate of wireless systems \cite{ying2015closed,hochwald2013sar}. Fortunately, a variety of studies have shown that exposure models can be leveraged to proactively design transmit signals to increase throughput while satisfying exposure constraints \cite{castellanos2021dynamic,ying2017sum,castellanos2016hybrid,ying2013beamformer,HelJamBro:Exposure-Modelling-and-Minimization:20}. In this context, exposure modeling is paramount to incorporating signal design optimization problems with the desired exposure limit. Similarly to channel models, exposure models generally rely on extensive simulation-based fitting \cite{ebadi2018determining,li2017high,hochwald2014incorporating} or on approximations of the near-field radiation \cite{castellanos2019signal}. Exposure-aware signal processing requires an effective model that remains analytically tractable. The proposed manifold model can be simultaneously leveraged for PD modeling and for solving the exposure-aware beamforming problem.

\subsection{Organization and notation}

The remainder of this paper has the following organization. In Section \ref{sec:system-model}-(a) we develop the antenna by discretizing a radiating volume as a collection of Hertzian dipoles. We then apply linearity and superposition to derive an array model in Section \ref{sec:system-model}-(b). In Section \ref{sec:system-model}-(c), we analyze the behavior of the proposed model in the far-field. In Section \ref{sec:validation}, we compare the proposed model with electromagnetic simulation results. The beamforming optimization problem and solutions are discussed in Section \ref{sec:beam-pattern}. Numerical beam pattern results are shown in Section \ref{sec:simulations}. Key insights and extensions for future work are addressed in Section \ref{sec:conclusion}.

A column vector is denoted as bold lowercase letter $\ba$, and a matrix is denoted as a bold uppercase letter $\bA$. For distinction, all field scalars are denoted as script uppercase letters $\cA$ and field vectors are denoted as bold script uppercase letters $\cbA$. The transpose of $\bA$ is denoted as $\bA^{\T}$, the conjugate of $\bA$ is denoted as $\bA^{\sf c}$, and the conjugate transpose of $\bA$ is denoted as $\bA^*$. An $N \times N$ diagonal matrix with entries $d_1, d_2, \cdots, d_N$ is denoted as $\diag(d_1, d_2, \cdots, d_N)$ and a block diagonal matrix with matrices $\bA_1, \bA_2, \cdots, \bA_N$ along the diagonal is denoted as $\blkdiag(\bA_1, \bA_2, \cdots, \bA_N)$. The two-norm of a vector is denoted as $\norm{\ba}$. The Frobenius norm of $\bA$ is denoted as $\norm{\bA}_{\sf F}$. The operator $\nabla \times$ denotes the curl operator of a vector field. The unit imaginary number is denoted as $\cj$. For a complex number $z = x + \cj y$, the real part of $z$ is denoted as $\Re(z) = x$ and the imaginary part of $z$ as $\Im(z) = y$.

\section{Proposed model development and analysis}
\label{sec:system-model}

In this section, we describe the antenna and the array models. We briefly discuss how the electromagnetic response of the antenna is calculated using the surface currents and how this relates to the Hertzian dipole interpretation. Then we present the electromagnetic array manifold.

\subsection{Antenna model}

We begin by describing the radiation from a single antenna element centered at the origin operating at frequency $f$. In the following, we will denote $c$ as the speed of light, $\lambda = c/f$ as the wavelength, $\omega = 2 \pi f$ as the radian frequency, and $\beta = 2 \pi / \lambda$ as the angular wavenumber. We additionally define $\epsilon_0$ as the permittivity of free-space, and $\mu_0$ as the permeability of free-space. All time-varying variables, such as fields and currents, are represented in phasor notation. This model is discussed in the context of narrowband operation but could be extended to a wideband case by including the frequency dependence of all relevant quantities.

Antennas radiate electromagnetic fields due to the time-varying currents in a conductive medium. Let $V \subseteq \R^3$ denote the spatial region encompassing the antenna and let $\cJ(\bp)$ denote the current density at point $\bp \in V$. The radiated fields are typically computed using an auxiliary quantity known as the magnetic vector potential $\cbA(\bp)$, given by \cite{StutzmanThieleAntennaTheoryAndDesign2012}
\begin{equation}
  \label{eq:vector-potential}
  \cbA(\bp) = \mu_0 \int_\cV \frac{e^{\cj \beta \, \abs{\bp - \bp'}}}{4 \pi \abs{\bp - \bp'}} \cJ(\bp') d \bp'.
\end{equation}
The magnetic vector potential can be used to calculate the electric  $\cbE(\bp)$ and the magnetic field $\cH(\bp)$ as
\begin{equation}
  \label{eq:A2H}
  \cH(\bp) = \frac{1}{\mu_0}\nabla \times \cbA(\bp),
\end{equation}
\begin{equation}
  \label{eq:A2E}
  \cbE(\bp) = \frac{1}{\cj \omega \epsilon_0} \nabla \times \cH(\bp).
\end{equation}
Obtaining the fields is straightforward given $\cbA(\bp)$, but the integral in \eqref{eq:vector-potential} is generally difficult to evaluate. 

Rather than assuming a particular current distribution and attempting to obtain $\cbA(\bp)$ from \eqref{eq:vector-potential}, we will use a discretization approach to get a more tractable expression for $\cbA(\bp)$. Partition $V$ into $K$ regions with the $k$th region denoted as $V_k \in V$. The surface current over the $k$th region can be assumed to be constant if the regions are sufficiently small. Mathematically, we express this as $\cJ(\bp) = \cJ_k$ for $\bp \in V_k$. The vector potential can then be written as
\begin{align} 
  \label{eq:vector-potential-partition}
  \cbA(\bp) \approx \mu_0 \sum_{k=1}^{K} \int_{V_k} \frac{e^{\cj \beta \, \abs{\bp - \bp'}}}{4 \pi \abs{\bp - \bp'}} \cJ_k d \bp'.
\end{align}
We further assume that the volumes $V_k$ are small enough and that the point $\br$ is sufficiently far away from each $V_k$ such that $\bp - \bp' = \bp_k$ for $\bp' \in V_k$. Letting $\bs_k$ denote the centroid of $V_k$, we will further assume that $\bp_k = \bp - \bs_k$. Under these assumptions, the integral in \eqref{eq:vector-potential-partition} can be simplified as 
\begin{equation}
  \label{eq:20}
  \int_{V_k} \frac{e^{\cj \beta \, \abs{\bp - \bp'}}}{4 \pi \abs{\bp - \bp'}} \cJ_k d \bp' =  \frac{e^{\cj \beta \, \abs{ \bp_k}}} {4 \pi \abs{\bp_k}} \cJ_k \int_{V_k}  d \bp' = \frac{e^{\cj \beta \, \abs{ \bp_k}}} {4 \pi \abs{\bp_k}} \cJ_k \abs{V_k}
\end{equation}
Defining the effective dipole moment of the $k$th antenna segment as $\bmm_k = \cJ_k \abs{V_k}$, this gives
\begin{align}
  \label{eq:disc-vector-potential}
  \cbA(\bp) & \approx \sum_{k=1}^{K} \frac{\mu_0 \abs{V_k} \cJ_k}{4 \pi \abs{\bp_k}} e^{\cj \beta \, \abs{\bp_k}} \\
  & = \sum_{k=1}^{K} \underbrace{\frac{\mu_0 \bmm_k}{4 \pi \abs{\bp_k}} e^{\cj \beta \, \abs{\bp_k}}}_{\cbA_k(\bp)}. \label{eq:disc-vector-potential}
\end{align}
We denote the term within the summation of \eqref{eq:disc-vector-potential} as the $k$th vector potential $\cbA_k(\bp)$. The principle of superposition allows us to find the radiated fields as the sum of the contributions from the $K$ vector potentials.

We further simplify the expression for the radiated fields by noting that $\cbA_k(\bp)$ has the same form as the vector potential of a Hertzian dipole. We first define a few variables that will aid in the field equations. We express the point $\bp$ using spherical coordinates with radial distance $r$, azimuthal angle $\phi$ and elevation angle $\theta$ as
\begin{equation}
  \label{eq:point-sph}
  \bp = r[\cos \phi \sin\theta, \, \sin \phi \sin \theta, \, \cos \theta]^{\T}.
\end{equation}
We follow the derivation in \cite{StutzmanThieleAntennaTheoryAndDesign2012} to express the Hertzian dipole radiated fields in terms of its moment $\bmm$. The  unit vectors corresponding to the spherical orthonormal basis are
\begin{align}
  \label{eq:unit-sph-vecs}
  \bur & = \left[ \cos \phi \sin\theta, \, \sin \phi \sin \theta, \, \cos \theta \right]^{\T}, \nonumber \\
  \bue & = \left[ -\sin \phi, \, \cos \phi, \, 0 \right]^{\T}, \\
    \bua & = \left[ -\cos \phi \sin \theta, \, -\sin \phi \sin \theta, \, \cos \theta \right]^{\T}. \nonumber
  \end{align}
  The radiated electric field of a Hertzian dipole is a function of the dipole moment and the point $\bp$. The amplitudes of the field components decay with distance according to the radial and angular amplitudes $\arad(\bp)$ and $\aang(\bp)$:
  \begin{equation}
    \label{eq:rad-amp}
    \arad(\bp) = \frac{e^{-\cj \beta r}}{\cj \omega \e0 2 \pi}\left(\frac{1}{r^3} + \frac{\cj \beta}{r^2} \right),
  \end{equation}
  \begin{equation}
    \label{eq:ang-amp}
    \aang(\bp) = -\frac{e^{-\cj \beta r}}{\cj \omega \e0 4 \pi} \left(\frac{1}{r^3} + \frac{\cj \beta}{r^2} - \frac{\beta^2}{r} \right).
  \end{equation}
  Then the spherical basis components of the electric field induced by a current dipole moment $\bm$ can be written as \cite{StutzmanThieleAntennaTheoryAndDesign2012}
  \begin{align}
    \Edipr(\bmm, \bp) & = \arad \bmm^{\T} \bur(\bp), \nonumber \\ 
    \Edipe(\bmm, \bp) & = \aang \bmm^{\T} \bue(\bp), \label{eq:E-dip-comp} \\
    \Edipa(\bmm, \bp) & = \aang \bmm^{\T} \bua(\bp).  \nonumber
  \end{align}
  The dipole field $\bEdip$ in spherical coordinates is then given by
  \begin{equation}
    \label{eq:21} 
    \bEdip(\bmm, \bp) = [\Edipr(\bmm, \bp), \, \Edipe(\bmm, \bp), \, \Edipa(\bmm, \bp)]^{\T}.
  \end{equation}
  The dipole field contains contributions in both angular directions and in the radial direction depending on the moment $\bm$ and the distance from the dipole. This model accounts for polarization because the direction of $\bmm$ at each segment will affect the orientation of $\bEdip$. Fig. \ref{fig:antenna-model} shows how an example of the segmenting and field calculation process for a linear antenna. The antenna is divided into a $K$ segments, and each segment contributes to the electric field at point $\bp$.
  \begin{figure}
  \centering
  \includegraphics[width = 0.4\textwidth]{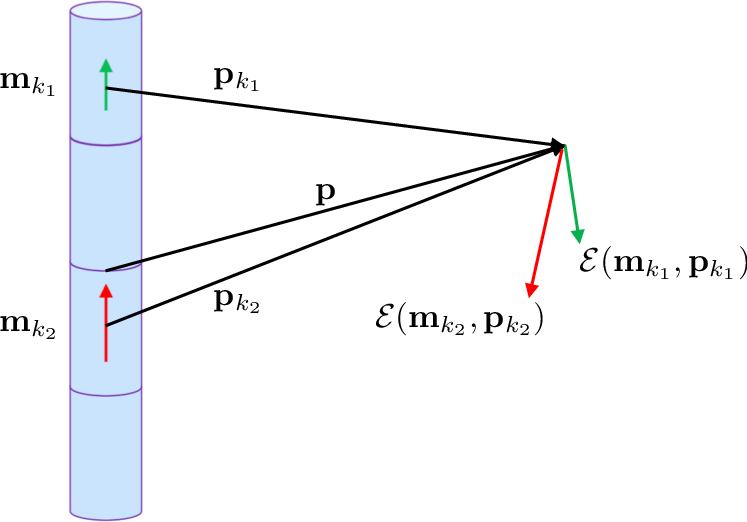}
  \caption{\label{fig:antenna-model} Diagram of the partitioned antenna model used to calculate the electric field at a point $\bp$. The moment vector $\bmm_k$ and the relative position $\bp_k$ of each segment is used to calculate the electric field contribution $\bE(\bmm_k, \bp_k)$.}
\end{figure}

  The dipole field expressions in \eqref{eq:E-dip-comp} are given in local spherical coordinates and must be rotated to apply superposition. We let the Cartesian basis components of the field be defined as $\Ex(\bmm, \bp)$, $\Ey(\bmm, \bp)$, and $\Ez(\bmm, \bp)$ and the corresponding field vector as $\bEcart(\bmm, \bp) = [\Ex(\bmm, \bp), \, \Ey(\bmm, \bp), \, \Ez(\bmm, \bp)]^{\T}$. The mapping between the spherical and Cartesian bases can be represented by transformation matrix $\bQ(\bp) = [\bur(\bp), \, \bua(\bp), \, \bue(\bp)]$, which yields 
  \begin{equation}
    \label{eq:sph2cart}
    \bEcart(\bmm, \bp) = \bQ(\bp) \cbE(\bmm, \bp)
  \end{equation}
  The rotation matrix is dependent on $\bp$ because the spherical basis is as function of the angles $\theta$ and $\phi$. 

  We now use the previous results to find the radiated electric field. Combining the results from \eqref{eq:disc-vector-potential}, \eqref{eq:E-dip-comp}, and \eqref{eq:sph2cart} the total field (in a spherical basis) is
  \begin{equation}
    \label{eq:E-sum-ant}
    \cbE(\bp) = \bQ^{\T}(\bp) \sum_{k=1}^{K} \bQ(\bp_k) \bEdip(\bmm_k, \bp_k).
  \end{equation}
  To simplify \eqref{eq:E-sum-ant}, we define the following matrices: the dipole field transform
  \begin{equation}
    \label{eq:dipole-field-trans}
    \bT_k(\bp) = \left[
      \begin{array}{ccc}
        \arad(\bp_k) \bur(\bp_k) & \aang(\bp_k) \bua(\bp_k) & \aang \bue(\bp_k)
      \end{array}
    \right]^{\T}
  \end{equation}
  and the rotational coherence matrix $\bR^{\T}_k(\bp) = \bQ^{\T}(\bp) \bQ(\bp_k)$. These definitions yield
  \begin{equation}
    \label{eq:E-dip-lin}
    \bEdip(\bmm_k, \bp) = \bR^{\T}_k(\bp) \bT_k(\bp) \bmm_k.
  \end{equation}
  We also define the antenna rotational coherence matrix $\Rbar^{\T}(\bp) = [\bR^{\T}(\bp,\bp_1), \, \cdots, \, \bR^{\T}(\bp,\bp_{K})]$, the antenna dipole field transform $\Tbar(\bp) = \blkdiag(\bT(\bp_1), \, \cdots, \, \bT(\bp_{K}))$, and the antenna moment vector $\mbar = [\bmm_1^{\T}, \, \cdots, \, \bmm_{K}^{\T}]^{\T}$. These definitions yield
  \begin{equation}
    \label{eq:E-lin-ant}
    \cbE(\bp) = \Rbar^{\T}(\bp) \Tbar(\bp) \mbar.
  \end{equation}
  The principle of superposition and the simple expressions for the field of each antenna segment yield a linear relationship between the moments and the radiated field.

  The feed current and the type of feed, which have so far been neglected, affects the antenna current distribution and the radiated field. The feed current can be assumed to scale the current moments. Let $w$ be the complex feed current weight and assume that $\mbar$ represents the effective current vector for a unity current. The electric field radiated by the antenna excited by $w$ is
  \begin{equation}
    \label{eq:E-lin-ant-weighted}
    \cbE_{\sf ant}(\bp, w) = \Rbar^{\T} \Tbar \mbar w.
  \end{equation}
  The moment vector is determined by the antenna design, the type of feed, and objects in near proximity to the antenna. By defining the moment vector with respect to a unity current, however, $\mbar$ is independent of both $\bp$ and $w$. The antenna moment can be obtained from electromagnetic simulation software. The choice of $K$ is typically determined by the software through a meshing procedure but can also be changed manually to reduce the complexity of the model.

  \subsection{Array model}
  \label{sec:array-model}

In the previous section, we analyzed the radiated field of a single antenna excited by a feed current. In this section, we apply the same techniques to find the radiated field from an array that is used for beamforming.

We apply the same arguments used before to model the array electric field by partitioning the array. The antenna array is modeled as a disjoint volume $V$. As before, we partition the $n$th antenna into $K_n$ pieces, denoted as $V_{n,k}$. Let $\bw \in \C^{N}$ be the beamforming vector which represents the weights applied to the excitation current of each array element. The excitation currents are assumed to be unity for simplicity. The field radiated by the array will depend on the current distribution in all of the antennas, which is dependent on the array excitation. We let $\bmm_{n,k}(\bw)$ denote the dipole moment induced in the $k$th segment of the $n$th antenna when the array is excited by $\bw$.  We similarly define $\bp_{n, k}$ as the position vector from the centroid of the $k$th segment of the $n$th antenna to $\bp$. We define the antenna moment vector for the $n$th antenna as
  \begin{equation}
    \mbar_n(\bw) = [\bmm_{n,1}^{\T}(\bw), \, \cdots, \, \bmm_{n,K_n}^{\T}(\bw)]^{\T}.
  \end{equation}
  The antenna moment vectors are then stacked together to form the array moment vector 
  \begin{equation}
    \mbar_{\sf arr}(\bw) = [\mbar_{n}^{\T}(\bw), \, \cdots, \, \mbar_{0,N-1}^{\T}(\bw)]^{\T}.
  \end{equation}
   Letting the total number of segments be $\Kbar = \sum_{n=1}^{N}K_n$, $\mbar_{\sf arr}(\bw)$ is a $3 \Kbar$ length vector because each $\bmm_{n,k}(\bw)$ has three spatial components. The only physical difference between the array and antenna models is that the volume is not contiguous, but this does not functionally affect the result.

  The array model approximates the array as an extended array of Hertzian dipoles with current $\bmm_{n,k}(\bw)$ centered at $\bp_{n, k}$. To convert the dipole currents to fields, we define the $n$th dipole field transform $\Tbar_n(\bp) = \blkdiag(\bT(\bp_{n,1}), \cdots, \bT(\bp_{n,K_n})) \in \C^{3K_n \times 3K_n}$ and the $n$th antenna rotational coherence matrix $\Rbar^{\T}_n(\bp) = [\bR^{\T}(\bp, \bp_{n,1}), \, \cdots, \, \bR^{\T}(\bp, \bp_{n,K_n})]^{\T} \in \R^{3 \times 3K_n}$. We further define block matrices  $\Rarr^{\T}(\bp) = [\Rbar^{\T}_1(\bp), \, \cdots, \, \Rbar^{\T}_{N}(\bp)]$, denoted as the $3 \times \Kbar$ array rotational coherence matrix, and $\Tarr(\bp) = \blkdiag(\Tbar_1(\bp), \, \cdots, \, \Tbar_{N}(\bp))$, denoted as the $3\Kbar \times 3\Kbar$ array dipole field transform. The array rotational coherence matrix and the array dipole field transform do not depend on the feed currents, so they remain constant regardless of which antennas are excited. The effective moment vector, however, encapsulates the current distribution throughout the entire array due to $\bw$. The array electric field is obtained through the summation of the contributions from the extended dipole array as
  \begin{equation}
  \label{eq:E-array}
  \Earr(\bp, \bw) = \Rarr^{\T}(\bp) \Tarr(\bp) \mbar_{\sf arr}(\bw).
\end{equation}
One issue with this model is that the array currents would need to be recalculated depending on the excitation. In the following, we separate the effect of $\bw$ from the currents.

Linearity can be applied to isolate $\bw$ from the moment vector by referencing the moments to unity currents. The main idea of the proposed array model is shown in Fig. \ref{fig:array-model-diagram}. Let $\bee_n \in \C^{N}$ be the standard unit vector with $n$th entry equal to one. We denote the current in the $n$th antenna when the $\ell$th antenna is excited by a unity current as $\mbar^{(\ell)}_n = \mbar_n(\bee_\ell)$. As shown n Fig. \ref{fig:array-model-diagram}-(a), the excitation of the $\ell$th antenna by $w_\ell$ leads to a current distribution that is captured by the moment vector $w_{\ell} \mbar_\ell^{(\ell)}$. Due to mutual coupling, however, the $m$th element in the array ($m \neq \ell$) also exhibits a current distribution characterized by $w_{\ell} \overline{\bmm}^{(\ell)}_{m}$.
Let $\mbar_{\sf arr}^{\ell} =  [(\mbar_{1}^{(\ell)})^{\T}, \, \cdots, \, (\mbar_{N}^{\ell})^{\T}]^{\T}$. The resulting electric field when the $\ell$th antenna is excited by $w_\ell$ is
\begin{equation}
  \label{eq:lin-array-indiv}
  \cbE_{\sf arr}^{(\ell)}(\bp, w_\ell) = \Rarr^{\T}(\bp) \Tarr(\bp)  \mbar_{\sf arr}^{(\ell)} w_\ell
\end{equation}
where  $\cbE^{(\ell)}(\bp, w_{\ell})$ is obtained by weighing each Hertzian dipole field contribution by $w_\ell$. To calculate the total field, we simply need to sum the field contributions when each antenna is individually excited. 
To find the electric field when the entire array is excited, we define $\Mbar = [\mbar_{\sf arr}^{(0)}, \, \cdots, \, \mbar_{\sf arr}^{(N-1)}]$ as the $3\Kbar \times N$ effective array moment matrix. From linearity, we have $\mbar_{\sf arr}(\bw) = \Mbar \bw$, which gives
\begin{equation}
  \label{eq:E-lin-array}
  \Earr(\bp, \bw) = \Rarr^{\T}(\bp) \Tarr(\bp) \Mbar \bw.
\end{equation}
The total current in the array is the sum of the currents when each element is excited individually as shown in Fig. \ref{fig:array-model-diagram}. 

\begin{figure}
  \centering
  \begin{subfigure}{0.4\textwidth}
    \centering
    \includegraphics[width=\textwidth]{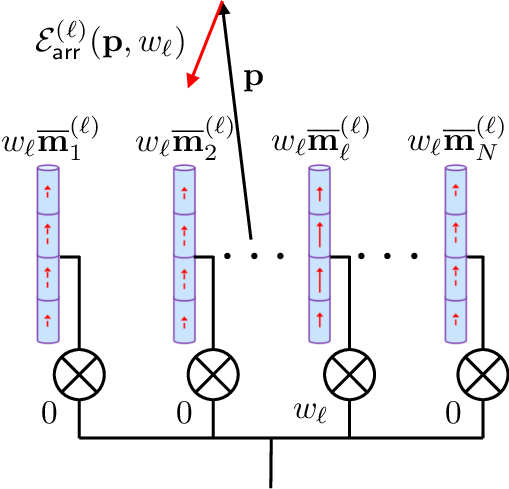}
    \caption{}
  \end{subfigure}
  \hfill
  \begin{subfigure}{0.4\textwidth}
    \centering
    \includegraphics[width=\textwidth]{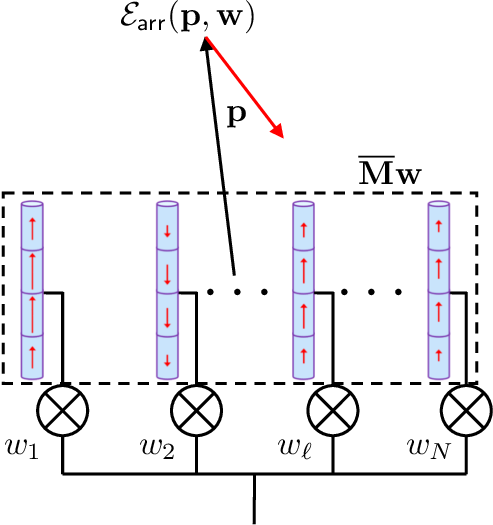}
    \caption{}
  \end{subfigure}
  \caption{\label{fig:array-model-diagram} Diagram of the discretized array model. (a) The $\ell$th antenna in the array is excited by a unit current weighed by $w_{\ell}$, which induces an antenna moment vector of $w_\ell \mbar_{n}^{\ell}$ in the $n$th element due to coupling. The fields radiated by each antenna result in $\cbE^{(\ell)}(\bp, w_{\ell})$. (b) All of the elements in the array are excited. The matrix $\overline{\bM}$ captures the moments of each segment of each antenna element to calculate the field $\cbE(\bp, \bw)$.}
\end{figure} 

The array can also be characterized in terms of an array manifold that maps the weights to the radiated fields. We let the $N \times 3$ array manifold matrix be defined as
\begin{equation}
  \Aarr^{\T}(\bp) =  \Rarr^{\T}  \Tarr \Mbar.
\end{equation}
Defining the manifold in this way yields a simple relationship for calculating the electric field, given as
\begin{equation}
  \label{eq:manifold-field}
  \Earr(\bp) = \Aarr^{\T}(\bp) \bw.
\end{equation}
Each column in $\Aarr^{\T}(\bp)$ maps to a spatial component of the electric field. Let $\ba_{\sf az}(\bp)$ be the steering vector for the azimuth direction, and let $\ba_{\sf el}(\bp)$ and $\ba_{\sf rad}(\bp)$ be defined for the elevation and radial directions. The matrix $\bA(\bp)$ can also be written as $\bA(\bp) = [\ba_{\sf az}(\bp), \, \ba_{\sf el}(\bp), \, \ba_{\sf rad}(\bp)]^{\T}$. Each of these vectors corresponds to a different polarization component of the array; $\ba_{\sf az}(\bp)$ corresponds to horizontal polarization, $\ba_{\sf el}(\bp)$ corresponds to vertical polarization, and $\ba_{\sf rad}(\bp)$ corresponds to the radial polarization. The steering vectors can be used to model individual polarizations. The application of the manifold for beamforming is shown in Section IV.

In most scenarios, exposure regulations constrain either the specific absorption rate (SAR) or the plane-wave equivalent PD \cite{lin2020fcc}. Letting $\eta_0$ denote the impedance of free-space, the plane-wave equivalent PD is
\begin{equation} 
  \label{eq:1}
  {\sf PD}(\bp) = \frac{\norm{\Earr(\bp)}^2}{2 \eta_0}.
\end{equation} 
The proposed model can be used to find the plane-wave equivalent PD. Using the relationship in \eqref{eq:manifold-field} for $\cbE_{\sf arr}(\bp)$
\begin{equation}
  \label{eq:pw-pd-model}
  {\sf PD}(\bp) = \frac{1}{2 \eta_0} \bw^* \Aarr^*(\bp) \Aarr(\bp) \bw,
\end{equation}
which is a quadratic form in terms of the beamformer $\bw$. This PD expression yields a mathematically tractable model that can be incorporated into the signal design to constrain the system exposure. A technique similar to that in \cite{castellanos2019signal} could be applied to transform the PD model into a SAR model, but we leave that analysis for future work. 

\subsection{Far-field approximation}

The prior analysis does not make any assumptions regarding the location of $\bp$ relative to the array. We now analyze the behavior of the proposed model in the far-field.  Consider a Hertzian dipole located at $\bs$ with moment vector $\bmm$ and relative position $\bptilde = \bp - \bs$. From \eqref{eq:E-dip-comp} and\eqref{eq:sph2cart}, the field at point $\bp$ in global coordinates is given by
\begin{equation}
  \label{eq:nf-dipole}
  \cbE_{\bs, \sf nf}(\bmm, \bp) = \bR^{\T}(\bp, \bptilde) \bT(\bptilde) \bmm.
\end{equation}
The following Lemma characterizes the accuracy of  model $\bp$ moves away from the antennas. 

\begin{prop}
  \label{prop:ff-dipole}
  Consider a Hertzian dipole located at $\bs$ with moment vector $\bmm$ and let $\bp$ be a position vector with norm $r$ as in \eqref{eq:point-sph}. In addition, let $\bptilde = \bp - \bs$. Define
    \begin{equation}
    \label{eq:ff-ang-amp}
    \aangff(\bp) = \frac{\beta^2 e^{-\cj \beta r}}{\cj \omega \e0 4 \pi r},
  \end{equation}
  and let the far-field dipole transform be defined as
  \begin{equation}
    \label{eq:ff-dipole-transform}
    \Tff(\bp) = \left[
      \begin{array}{ccc}
        {\bf 0} & \aangff(\bp) \bua(\bp) & \aangff(\bp) \bue(\bp)
      \end{array}
    \right]^{\T}.
  \end{equation}
  Then the relative error between the radiated field $\cbE_{\bs, \sf nf}(\bmm, \bp)$ and the far-field approximation
  \begin{equation}
    \label{eq:ff-dipole}
    \cbE_{\bs, \sf ff}(\bmm, \bp) = \Tff(\bptilde) \bmm
  \end{equation}
  converges to zero as 
  \begin{equation}
    \label{eq:rel-error-nf-ff-dipole}
    \lim\limits_{r \to \infty} \frac{\norm{\cbE_{\bs, \sf nf}(\bmm, \bp) - \cbE_{\bs, \sf ff}(\bmm, \bp)}}{\norm{\cbE_{\bs, \sf nf}(\bmm, \bp)}} = 0
  \end{equation}
\end{prop}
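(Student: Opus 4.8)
The plan is to show that the numerator of the relative error decays like $1/r^2$ while the denominator decays only like $1/r$, so the ratio is $O(1/r)\to 0$. The structural fact I would exploit first is that $\bR^{\T}(\bp,\bptilde) = \bQ^{\T}(\bp)\bQ(\bptilde)$ is a product of real orthogonal matrices (each $\bQ$ has the orthonormal columns $\bur,\bua,\bue$), hence is itself real orthogonal and norm-preserving: $\norm{\bR^{\T}(\bp,\bptilde)\bz} = \norm{\bz}$ for every $\bz \in \C^3$, including complex $\bz$ since $\bR\bR^{\T}=\bI$ with $\bR$ real. In particular $\norm{\cbE_{\bs,\sf nf}(\bmm,\bp)} = \norm{\bT(\bptilde)\bmm}$, which strips away the rotation and makes the denominator directly computable.

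Next I would record the asymptotic orders of the amplitude functions as $\tilde r := \norm{\bptilde}\to\infty$. From \eqref{eq:rad-amp} and \eqref{eq:ang-amp}, $\arad(\bptilde)=\Theta(1/\tilde r^2)$ and $\aang(\bptilde)=\Theta(1/\tilde r)$, while the residual $\aang(\bptilde)-\aangff(\bptilde)=\Theta(1/\tilde r^2)$, since $\aangff$ is exactly the $1/\tilde r$ term of $\aang$ and what remains is the $\frac{1}{\tilde r^3}+\frac{\cj\beta}{\tilde r^2}$ part. Writing out the three components of $\bT(\bptilde)\bmm$ gives
\[
\norm{\bT(\bptilde)\bmm}^2 = \abs{\arad}^2\abs{\bmm^{\T}\bur}^2 + \abs{\aang}^2\big(\abs{\bmm^{\T}\bua}^2 + \abs{\bmm^{\T}\bue}^2\big),
\]
with all unit vectors evaluated at $\bptilde$; its leading term is $\abs{\aang}^2(\abs{\bmm^{\T}\bua}^2+\abs{\bmm^{\T}\bue}^2)=\Theta(1/\tilde r^2)$, so $\norm{\cbE_{\bs,\sf nf}}=\Theta(1/r)$ after using $\tilde r/r\to 1$ (which holds because $\abs{\tilde r - r}\le\norm{\bs}$).

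For the numerator I would split the error to separate the amplitude truncation from the basis rotation,
\[
\cbE_{\bs,\sf nf} - \cbE_{\bs,\sf ff} = \bR^{\T}(\bp,\bptilde)\big(\bT(\bptilde)-\Tff(\bptilde)\big)\bmm + \big(\bR^{\T}(\bp,\bptilde)-\bI\big)\Tff(\bptilde)\bmm.
\]
By the norm-preservation above, the first term equals $\norm{(\bT(\bptilde)-\Tff(\bptilde))\bmm}$, whose entries are $\arad(\bmm^{\T}\bur)$ and $(\aang-\aangff)(\bmm^{\T}\bua),(\aang-\aangff)(\bmm^{\T}\bue)$; since the projections of $\bmm$ are bounded by $\norm{\bmm}$ and both amplitudes are $\Theta(1/\tilde r^2)$, this term is $O(1/\tilde r^2)$. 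For the second term I would show the direction of $\bptilde=\bp-\bs$ converges to that of $\bp$: the angle between them satisfies $\sin\angle(\bp,\bptilde)=\norm{\bp\times\bs}/(\norm{\bp}\norm{\bptilde})\le\norm{\bs}/\norm{\bptilde}=O(1/r)$, so by Lipschitz continuity of $\bQ$ in $(\theta,\phi)$ we get $\norm{\bR^{\T}(\bp,\bptilde)-\bI}=O(1/r)$; multiplying by $\norm{\Tff(\bptilde)\bmm}=O(1/r)$ gives $O(1/r^2)$. Hence $\norm{\cbE_{\bs,\sf nf}-\cbE_{\bs,\sf ff}}=O(1/r^2)$, and dividing by $\Theta(1/r)$ yields the claimed limit.

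The hard part is the lower bound on the denominator, and this is the step I would treat most carefully. The argument needs the transverse energy $\abs{\bmm^{\T}\bua(\bptilde)}^2+\abs{\bmm^{\T}\bue(\bptilde)}^2$ to stay bounded away from zero; by completeness of the real orthonormal frame this equals $\norm{\bmm}^2-\abs{\bmm^{\T}\bur(\bptilde)}^2$, which in the limit tends to $\norm{\bmm}^2-\abs{\bmm^{\T}\hat{\bu}}^2$ with $\hat{\bu}=\lim_{r\to\infty}\bp/r$. This is strictly positive \emph{unless} $\bmm$ is parallel to $\hat{\bu}$, i.e. the observation direction coincides with the dipole axis, along which a Hertzian dipole produces no radiative field; in that degenerate case $\cbE_{\bs,\sf ff}=\bzero$ and the limit fails, so I would state the hypothesis that $\bmm$ is not aligned with $\hat{\bu}$ explicitly. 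I would likewise choose the spherical frame so that $\hat{\bu}$ avoids the poles $\theta\in\{0,\pi\}$, where $\bQ$ is not differentiable and the Lipschitz bound on the rotation would need a separate argument.
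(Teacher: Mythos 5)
Your proof is correct and follows the same overall strategy as the paper's (bound the numerator from above, the denominator from below, and use $\bR(\bp,\bptilde)\to\bI$ together with $\aang-\aangff=O(1/r^2)$ and $\arad=O(1/r^2)$), but it is more quantitative and, importantly, more honest about the one step the paper glosses over. The paper applies Cauchy--Schwarz to the whole matrix difference $\bR^{\T}(\bp,\bptilde)\bT(\bptilde)-\Tff(\bptilde)$ and then asserts that $r\norm{\cbE_{\bs,\sf nf}(\bmm,\bp)}$ ``converges to a constant''; your two-term splitting into an amplitude-truncation error and a rotation error, combined with the observation that $\bR^{\T}$ is real orthogonal and hence norm-preserving on $\C^3$, gives the same $O(1/r^2)$ numerator with explicit rates. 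Where you genuinely add value is the denominator: the constant that $r\norm{\cbE_{\bs,\sf nf}}$ converges to is proportional to the transverse energy $\norm{\bmm}^2-\abs{\bmm^{\T}\hat{\bu}}^2$, and when the observation direction $\hat{\bu}$ is parallel to $\bmm$ this constant is zero, the near field is dominated by the $\Theta(1/r^2)$ radial term that $\Tff$ discards, and the relative error does not vanish. So the lemma as stated silently excludes this degenerate direction, and your explicit non-alignment hypothesis (and your caveat about the coordinate singularity of $\bQ$ at $\theta\in\{0,\pi\}$, where the Lipschitz bound on the rotation needs a separate argument) repairs a real gap rather than being excess caution. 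One cosmetic note: you assume the columns of $\bQ$ are orthonormal, which is the physically intended frame; the expression for $\bua$ printed in the paper is not actually orthogonal to $\bur$, so your reading is the correct one but worth flagging as a correction to the source rather than something the printed definitions deliver.
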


\begin{proof}
  We first substitute \eqref{eq:nf-dipole} and \eqref{eq:ff-dipole} into the relative error in \eqref{eq:rel-error-nf-ff-dipole} to obtain
  \begin{align}
     \frac{\norm{\cbE_{\bs, \sf nf}(\bmm, \bp) - \cbE_{\bs, \sf ff}(\bmm, \bp)}}{\norm{\cbE_{\bs, \sf nf}(\bmm, \bp)}} = \frac{\norm{\left(\bR^{\T}(\bp, \bptilde) \bT(\bptilde) - \Tff(\bptilde)\right) \bmm}}{\norm{\cbE_{\bs, \sf nf}(\bmm, \bp)}}.
  \end{align}
  We further apply the Cauchy-Schwarz inequality to get
  \begin{align}
    \label{eq:rel-error-ineq}
    \frac{\norm{\left(\bR^{\T}(\bp, \bptilde) \bT(\bptilde) - \Tff(\bptilde)\right) \bmm}}{\norm{\cbE_{\bs, \sf nf}(\bmm, \bp)}} \leq \nonumber \\ \frac{\norm{\bR^{\T}(\bp, \bptilde) \bT(\bptilde) - \Tff(\bptilde)}_{\sf F} \norm{\bmm}}{\norm{\cbE_{\bs, \sf nf}(\bmm, \bp)}}. 
  \end{align}
  To prove the error converges to zero, we look at the denominator and numerator separately. In the far-field, the electric field from a Hertzian dipole decreases as $1/r$. Because of this, $r \norm{\cbE_{\bs, \sf nf}(\bmm, \bp)}$ converges to a constant as $r$ tends to infinity. It therefore suffices to show that the term $r \norm{\bR^{\T}(\bp, \bptilde) \bT(\bptilde) - \Tff(\bptilde)}_{\sf F}$ converges to zero.
  
  We analyze the behavior of $\bptilde$ by using a first-order Taylor expansion for $\norm{\bptilde}$ as 
  \begin{align}
    \label{eq:dist-approx}
    \norm{\bp - \bs} & = \sqrt{r^2 - 2\bp^{\T}\bs + \norm{\bs}^2} \nonumber \\
                       & = r - \frac{\bp^{\sf T} \bs}{r} + O \left( \frac{1}{r^2} \right).
  \end{align}
  Let $\varphi$ be the angle between the vectors $\bp$ and $\bptilde$ which satisfies
  \begin{equation}
    \label{eq:5}
    \cos \varphi = \frac{\bptilde^{\T} \bp}{\norm{\bptilde} \norm{\bp}}.
  \end{equation}
  Substituting the approximation in \eqref{eq:dist-approx} yields
  \begin{equation}
    \label{eq:6}
    \cos \varphi = \frac{r^2 - \bp^{\T} \bptilde}{r^2 - \bp^{\T} \bptilde + O(1/r)} = 1 + O(1/r^3).
  \end{equation}
  This implies that for large $r$, the angular separation between $\bp$ and $\bptilde$ can be approximated as $\varphi \approx 0$. From the definition of $\bR(\bp, \bptilde)$, this gives $\bR(\bp, \bptilde) \rightarrow \bI$ elementwise.

  We now look at the dipole transform $\bT(\bp)$ and show that $r\bT(\bptilde) \rightarrow r\Tff(\bptilde)$. From the previous argument regarding the angular separation between $\bp$ and $\bptilde$, any spherical basis vector in \eqref{eq:unit-sph-vecs} converges elementwise as $\bu(\bptilde) \rightarrow \bu(\bp)$. From the expressions for $\arad$ and $\aang$ in \eqref{eq:rad-amp} and \eqref{eq:ang-amp}, we also have $r\arad(\bptilde) \rightarrow 0$ and $r\aang(\bptilde) \rightarrow r\aangff(\bptilde)$. This implies the elementwise limit $r\bT(\bptilde) \rightarrow r\Tff(\bptilde)$ as $r$ goes to infinity.

  The limits $\bR(\bp, \bptilde) \rightarrow \bI$ and $r\bT(\bptilde) \rightarrow r\Tff(\bp)$ imply
  \begin{equation}
    \label{eq:ineq-limit}
    \lim\limits_{r \to \infty} \frac{r\norm{\bR^{\T}(\bp, \bptilde) \bT(\bptilde) - \Tff(\bp)}_{\sf F} \norm{\bmm}}{r\norm{\cbE_{\bs, \sf nf}(\bmm, \bp)}} = 0. 
  \end{equation}
  The inequality in \eqref{eq:rel-error-ineq} and the limit \eqref{eq:ineq-limit} imply \eqref{eq:rel-error-nf-ff-dipole}, which completes the proof.
\end{proof}

The previous result can be applied to develop a far-field model. The array model for the radiated electric field can be expressed as the summation
\begin{equation}
  \label{eq:arr-nf-sum}
  \cbE_{\sf arr}(\bp, \bw) = \sum_{n=1}^{N} \sum_{k=1}^{K_n} \bR(\bp, \bp_{n,k}) \bT(\bp_{n,k}) \bmm_{n,k}(\bw).
\end{equation}
For $\bp$ sufficiently far from the array, the array field can be approximated as
\begin{equation}
  \label{eq:arr-ff-sum}
  \cbE_{\sf arr, ff}(\bp, \bw) = \sum_{n=1}^{N} \sum_{k=1}^{K_n} \Tff(\bp_{n,k}) \bmm_{n,k}(\bw)
\end{equation}
as shown in the following lemma.

\begin{prop}
  \label{prop:ff-array}
  Consider $N$ antennas excited by the weights $\bw$, and the $n$th antenna be composed of $K_n$ Hertzian dipoles. Let the $k$th dipole in the $n$th antenna be located at $\bs_{n,k}$ with moment vector $\bmm_{n,k}$ and let $\bp$ be a position vector with norm $r$ as in \eqref{eq:point-sph}. In addition, let $\bp_{n,k} = \bp - \bs_{n,k}$. Then the relative error between $\cbE_{\sf arr}(\bp, \bw)$ and $\cbE_{\sf arr, ff}(\bp, \bw)$ converges to zero as
  \begin{equation}
    \label{eq:rel-error-nf-ff-array}
    \lim\limits_{r \to \infty} \frac{\norm{\cbE_{\sf arr}(\bp, \bw) - \cbE_{\sf arr, ff}(\bp, \bw)}}{\norm{\cbE_{\sf arr}(\bp, \bw)}} = 0
  \end{equation}
  as long as $r\lim_{r \to \infty} \norm{\cbE_{\sf arr}(\bp, \bw)}$ converges to a constant.
\end{prop}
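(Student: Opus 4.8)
The plan is to reduce the array statement to the single-dipole result in Lemma~\ref{prop:ff-dipole} by exploiting linearity of the field in the moments together with the fact that the array is built from only finitely many dipoles. First I would collapse the two double sums into a single error field over all $\Kbar = \sum_{n} K_n$ dipoles,
\begin{equation}
  \cbE_{\sf arr}(\bp, \bw) - \cbE_{\sf arr, ff}(\bp, \bw) = \sum_{n=1}^{N}\sum_{k=1}^{K_n} \left( \bR^{\T}(\bp, \bp_{n,k}) \bT(\bp_{n,k}) - \Tff(\bp_{n,k}) \right) \bmm_{n,k}(\bw),
\end{equation}
and then apply the triangle inequality followed by Cauchy--Schwarz to each summand, exactly as in the proof of Lemma~\ref{prop:ff-dipole}, to obtain
\begin{equation}
  \norm{\cbE_{\sf arr}(\bp, \bw) - \cbE_{\sf arr, ff}(\bp, \bw)} \leq \sum_{n=1}^{N}\sum_{k=1}^{K_n} \norm{\bR^{\T}(\bp, \bp_{n,k}) \bT(\bp_{n,k}) - \Tff(\bp_{n,k})}_{\sf F}\, \norm{\bmm_{n,k}(\bw)}.
\end{equation}

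The core of the argument is the per-dipole estimate already established inside the proof of Lemma~\ref{prop:ff-dipole}: for a fixed source location $\bs_{n,k}$, the Taylor expansion $\norm{\bp - \bs_{n,k}} = r - \bp^{\T}\bs_{n,k}/r + O(1/r^2)$ forces $\bR(\bp, \bp_{n,k}) \to \bI$ and $r\bT(\bp_{n,k}) \to r\Tff(\bp_{n,k})$ elementwise, so that $r\norm{\bR^{\T}(\bp, \bp_{n,k}) \bT(\bp_{n,k}) - \Tff(\bp_{n,k})}_{\sf F} \to 0$. Multiplying the bound above by $r$ and passing the limit inside the finite double sum, I would conclude $r\norm{\cbE_{\sf arr}(\bp, \bw) - \cbE_{\sf arr, ff}(\bp, \bw)} \to 0$, since each of the $\Kbar$ moments $\bmm_{n,k}(\bw)$ is a fixed finite vector. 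Finally, I would write the relative error as
\begin{equation}
  \frac{\norm{\cbE_{\sf arr}(\bp, \bw) - \cbE_{\sf arr, ff}(\bp, \bw)}}{\norm{\cbE_{\sf arr}(\bp, \bw)}} = \frac{r\norm{\cbE_{\sf arr}(\bp, \bw) - \cbE_{\sf arr, ff}(\bp, \bw)}}{r\norm{\cbE_{\sf arr}(\bp, \bw)}}
\end{equation}
and invoke the hypothesis that $r\norm{\cbE_{\sf arr}(\bp, \bw)}$ tends to a nonzero constant, so the ratio tends to $0/\text{const} = 0$.

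The main obstacle is the denominator rather than the numerator. Unlike the single-dipole case, where the field necessarily decays as exactly $1/r$ so that $r\norm{\cbE_{\bs, \sf nf}(\bmm, \bp)}$ automatically converges to a positive constant, the superposition of many dipole fields can interfere destructively along particular directions, producing pattern nulls in which $\norm{\cbE_{\sf arr}(\bp, \bw)}$ decays strictly faster than $1/r$. This is precisely why the lemma carries the explicit hypothesis on $r\norm{\cbE_{\sf arr}(\bp, \bw)}$: it excludes these null directions and guarantees a nonzero limiting constant, which is exactly what converts the numerator estimate into the stated relative-error convergence. The only other point needing care is uniformity across dipoles, but since $\Kbar$ is finite and the offsets $\bs_{n,k}$ are bounded, every per-dipole limit holds at the same $1/r$ rate and the finite sum poses no difficulty.
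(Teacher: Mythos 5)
Your proof is correct and follows essentially the same route as the paper's: triangle inequality over the finite collection of dipoles, the per-dipole limit $r\norm{\bR^{\T}(\bp,\bp_{n,k})\bT(\bp_{n,k}) - \Tff(\bp_{n,k})}_{\sf F} \to 0$ imported from Lemma~1, and the stated hypothesis on $r\norm{\cbE_{\sf arr}(\bp,\bw)}$ to control the denominator. Your added observation that this hypothesis must yield a \emph{nonzero} constant---i.e., that it excludes pattern nulls where destructive superposition makes the array field decay faster than $1/r$---is a correct clarification of a point the paper leaves implicit.
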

\begin{proof}
  We substitute \eqref{eq:arr-nf-sum} and \eqref{eq:arr-ff-sum} in the error to obtain
  \begin{align}
    \label{eq:arr-sum-ineq}
    & \frac{\norm{\cbE_{\sf arr}(\bp, \bw) - \cbE_{\sf arr, ff}(\bp, \bw)}}{\norm{\cbE_{\sf arr}(\bp, \bw)}} = \nonumber \\
    & \frac{\norm{\sum\limits_{n=0}^{N-1} \sum\limits_{k=0}^{K_n-1} \left(\bR(\bp, \bp_{n,k}) \bT(\bp_{n,k}) - \Tff(\bp_{n,k}) \right) \bmm_{n,k}(\bw)}}{\norm{\cbE_{\sf arr}(\bp, \bw)}} \leq \nonumber \\
      & \frac{\sum\limits_{n=0}^{N-1} \sum\limits_{k=0}^{K_n-1} \norm{\left(\bR(\bp, \bp_{n,k}) \bT(\bp_{n,k}) - \Tff(\bp_{n,k}) \right) \bmm_{n,k}(\bw)}}{\norm{\cbE_{\sf arr}(\bp, \bw)}}.
  \end{align}
  The rest of the proof follows the arguments in Lemma 1 to show that the numerator summands scaled by $r$ converge to $0$. Together with the assumption that $r\lim_{r \to \infty} \norm{\cbE_{\sf arr}(\bp, \bw)}$, this implies the limit in \eqref{eq:rel-error-nf-ff-array} holds.
\end{proof}

The far-field model can be expressed in a similar form as \eqref{eq:E-lin-array}. One key observation from Lemma 2 is that the rotational coherence matrix is not needed in the far-field model. In other words, the spherical basis vectors observed at each $\bp_{n,k}$ will appear identical as $\bp$ moves far from the array. It suffices to then define a far-field array dipole field transform as $\Tarrff = \left[ \Tff(\bp_{1,1}), \, \cdots, \, \Tff(\bp_{1,K_1}), \, \cdots, \, \Tff(\bp_{N, K_N}) \right]$ without the rotations. The far-field model is then
\begin{equation}
  \label{eq:7}
  \cbE_{\sf arr, ff}(\bmm, \bp) = \Tarrff \Mbar \bw,
\end{equation}
and the far-field array manifold matrix is given by
\begin{equation}
  \label{eq:3}
  \bA_{\sf arr, ff}^{\T} = \Tarrff \Mbar.
\end{equation}
We will compare the accuracy of the far-field and near-field models relative to each other in Section \ref{sec:validation}.

We now discuss a special case in which the proposed manifold reduces to manifold characterizations found in prior work \cite{FriedlanderExtendedManifoldAntennaArrays2020}. Let $\bmm_{n,k}^{(\ell)}$ be the moment of the $n$th antenna's $k$th segment when the $\ell$th antenna is excited by a unity current. The far-field model can be expanded as
\begin{align}
  \label{eq:8}
  \Earrff(\bp, \bw) = & \left[ \sum_{n=1}^N \sum_{k=1}^{K_n} \Tff(\bp_{n,k}) \bmm_{n,k}^{(1)}, \, \cdots, \right. \nonumber \\ & \left. \sum_{n=1}^N \sum_{k=1}^{K_n} \Tff(\bp_{n,k}) \bmm_{n,k}^{(N)}\right] \bw
\end{align}
Assume that for each antenna, the points $\bp_{n,k}$ for the $n$th antenna can be approximated by the single point $\bp_n$. Defining $\Mbar_{n,k} = \left[ \bmm_{n,k}^{(1)}, \, \cdots, \, \bmm_{n,k}^{(N)} \right]$,
\begin{align}
  \label{eq:4}
  \cbE_{\sf arr, ff}(\bp, \bw) = \sum_{n=1}^N  \Tff(\bp_n)  \sum_{k=1}^{K_n} \Mbar_{n,k} \bw.
\end{align}
If we further assume that the antennas in the array are not coupled to each other, then each antenna will only exhibit a current when it is directly excited. This implies $\Mbar_{n,k} = \left[ \mathbf{0}, \, \cdots, \, \bmm_{n,k}^{(n)}, \, \cdots, \, \mathbf{0}\right]$, which yields
\begin{align}
  \label{eq:22}
  & \cbE_{\sf arr, ff}(\bp, \bw)  = \sum_{n=1}^N  \Tff(\bp_n)  w_n \sum_{k=1}^{K_n} \bmm_{n,k}^{(n)} \\
                               & = \sum_{n=1}^N  \frac{\beta^2 e^{- \cj \beta \, \norm{\bp_n}}}{\cj \omega \epsilon_0 4 \pi \norm{\bp_n}}  \left[
        {\bf 0} \, \bua(\bp) \, \bue(\bp) \right] w_n \sum_{k=1}^{K_n} \bmm_{n,k}^{(n)}
\end{align}
where the second expression follows from the definition of $\Tff(\bp_n)$ in \eqref{eq:ff-ang-amp} and \eqref{eq:ff-dipole-transform}.
Defining the $n$th field pattern as
\begin{equation} 
  \label{eq:24}
  \bg(\bp_n) = \frac{\beta^2 }{\cj \omega \epsilon_0 4 \pi}  \left[
        {\bf 0} \, \bua(\bp) \, \bue(\bp) \right],
\end{equation}
the far-field electric field model reduces to
\begin{equation}
  \label{eq:23}
  \cbE_{\sf arr, ff}(\bp, \bw)  = \sum_{n=1}^N  \frac{e^{-\cj \beta \, \norm{\bp_n}}}{\norm{\bp_n}} \bg_n w_n.
\end{equation}
Careful inspection of this expression reveals that this model uses the \emph{isolated} array manifold in which the gain pattern of each antenna is unperturbed by the presence of the other antennas. Defining the isolated manifold as
\begin{equation}
  \label{eq:25}
  \bA_{\sf arr, iso}(\bp) = \left[\frac{e^{-\cj \beta \, \norm{\bp_0}}}{\norm{\bp_1}} \bg_1, \, \cdots, \, \frac{e^{-\cj \beta \, \norm{\bp_N}}}{\norm{\bp_N}} \bg_N   \right]
\end{equation}
gives
\begin{equation}
  \label{eq:E-field-isolated}
  \cbE_{\sf arr, ff}(\bp, \bw) = \bA_{\sf arr, iso}^{\T}(\bp) \bw.
\end{equation}
To summarize, the proposed model converges to the isolated manifold under the following conditions: each antenna is electrically small compared to distance $r$ and the antennas are uncoupled.

\section{Array model validation}
\label{sec:validation}

In this section, we validate the proposed modeling results by comparing our E-field approximations to measurements obtained from electromagnetic simulations. We simulated a four-element array of antennas using the MATLAB Antenna Toolbox. We test the model with two types of array: a homogeneous half-wave dipole array, and a heterogenous array composed of half-wave dipoles and V-dipoles. In both cases, the antennas are tuned to operate at an operating frequency of $5$ GHz. The MATLAB simulation is also used to obtain the effective array moment matrix $\bM$ by measuring the current distribution when different array elements are excited. MATLAB also automatically partitions the array into the segments as discussed in the previous section. The specifications for both types of antenna arrays, including the number segments per antenna, are given in Table \ref{tab:antenna-params} for completeness.

\begin{table}
  \centering
  \begin{tabular}{|r||l|l|}
    
 \hline
 Antenna type & Dipole & V-dipole \\
 \hline
  Operating freq. (GHz)  & 5   & 5 \\
  Length (mm) & 28.2 & 14.6 \\
  Width (mm) & 0.600 & 1.50 \\
  Flare angle $(^\circ)$ & n/a & 90 \\
  \# segments/ant & 40 & 100 \\
  \hline
\end{tabular}
\caption{\label{tab:antenna-params} Antenna specifications.}
\end{table} 

\subsection{Electric field}

In the first experiment, we obtain electric field measurements using MATLAB Antenna Toolbox and compare these against the proposed model. At a point $\bp$, we let $\cbE_{\sf sim}$ be the field measurements obtained from MATLAB and $\cbE_{\sf mod}$ be the field values obtained from the proposed model in \eqref{eq:E-lin-array}. As a benchmark, we also use the far-field approximation in which the rotational coherence is ignore, as discussed in Section II-C.

We calculate the relative error between the simulated and modeled fields as
\begin{equation}
  \label{eq:sim-error}
\text{Relative error} =  \frac{\norm{\cbE_{\sf sim} - \cbE_{\sf mod}}_{\sf F}}{\cbE_{\sf sim}}
\end{equation}
 as a function of distance from an eight-element dipole array for different element spacings in Fig. \ref{fig:E-field-validation}. We compare the proposed far-field and near-field models against two standard benchmarks: the isolated manifold and the embedded manifold. The isolated manifold uses the model in \eqref{eq:E-field-isolated} and uses the uncoupled field patterns of the antennas. The embedded manifold uses the same model but utilizes the coupled antenna far-field patterns utilized from MATLAB \cite{FriedlanderMutualCouplingMatrixArray2020}. We note that both of these baselines use far-field gain patterns. 

The results in Fig. \ref{fig:E-field-validation} show that the error for the proposed near-field model is relatively small at all distances. As expected, the far-field approximation converges to the near-field model as the distance increases. Comparing Figs. \ref{fig:E-field-validation}-(a) and  \ref{fig:E-field-validation}-(b), we see that larger element spacings increase the accuracy f the isolated manifold. This makes sense since larger element spacings will decrease mutual coupling and satisfy the assumptions needed to apply the isolated manifold model.

\begin{figure}
\centering
\begin{subfigure}{0.45\textwidth}
  \includegraphics[width=\textwidth]{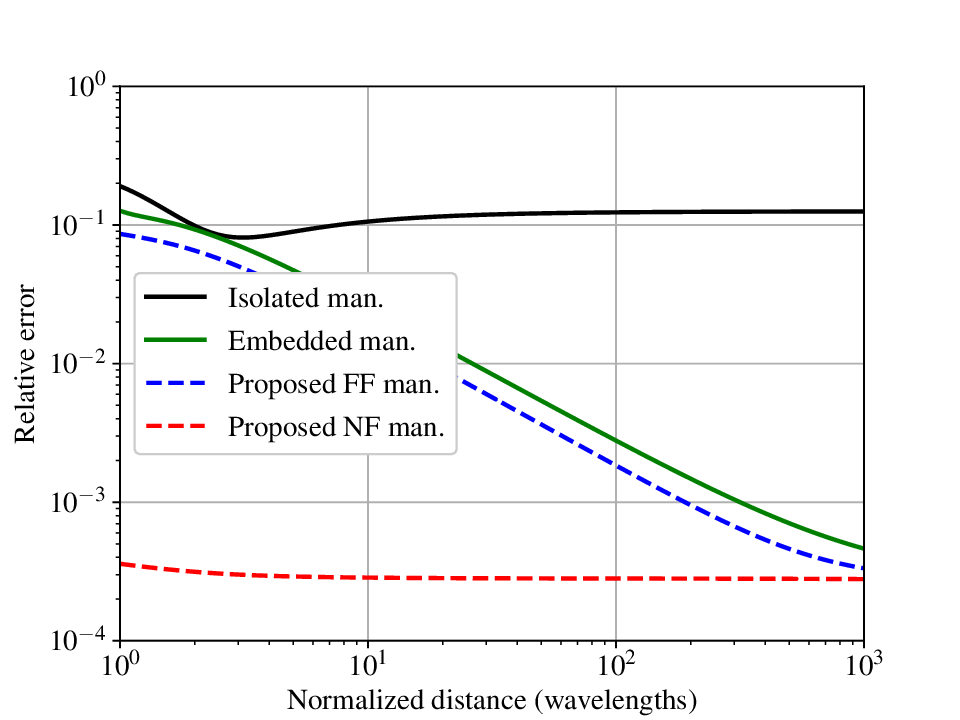}
    \caption{}
\end{subfigure}
\begin{subfigure}{0.45\textwidth}
    \includegraphics[width=\textwidth]{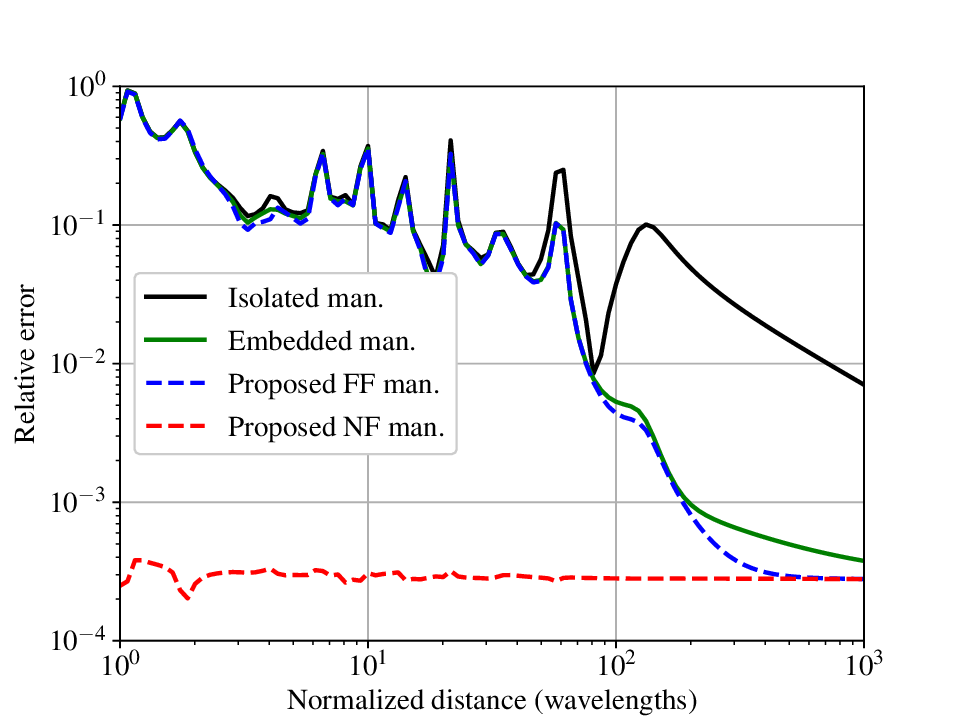}
    \caption{}
\end{subfigure} 
        
\caption{\label{fig:E-field-validation} Relative error between the electric field values obtained from electromagnetic simulations and different manifold models for an eight-element half-wavelength dipole array (a) with quarter-wavelength spacing and (b) four-wavelength spacing. The proposed model achieves lower error by accounting for near-field effects and polarization mismatches between antenna segments. At larger element spacings, the traditional manifold, which only uses the isolated far-field pattern, becomes more accurate due to decreased coupling.}
\end{figure}

In Fig. \ref{fig:E-field-validation-bowtie}, we show results for a similar experiment but replace a heterogenous array. The array composition is shown in Fig. \ref{fig:E-field-validation-bowtie}(a). As before, the proposed near-field error achieves a relatively low error compared to the MATLAB simulated values. In this case, the isolated manifold cannot be applied because both types of antennas have different field patterns. It is important to note that the embedded manifold exhibits a lower error, especially at larger distances. This is expected because the embedded patterns used from MATLAB use the same measurements as the simulated values. As the field pattern converges to its far-field characterization, the embedded manifold will become as accurate as the simulated value. Even in this case, the proposed manifold solution is preferable due to the large number of measurements that would be needed to characterize the embedded manifold. The embedded patterns must be measured separately for each antenna at all angles. In contrast, the proposed model uses the moment matrix $\Mbar$ to characterize both the near-field and far-field. The moments do not change as a function of position or distance, and so the proposed model offers a flexible alternative that is effective at all distances from the array.

\begin{figure}
\centering
\begin{subfigure}{0.45\textwidth}
  \includegraphics[width=\textwidth]{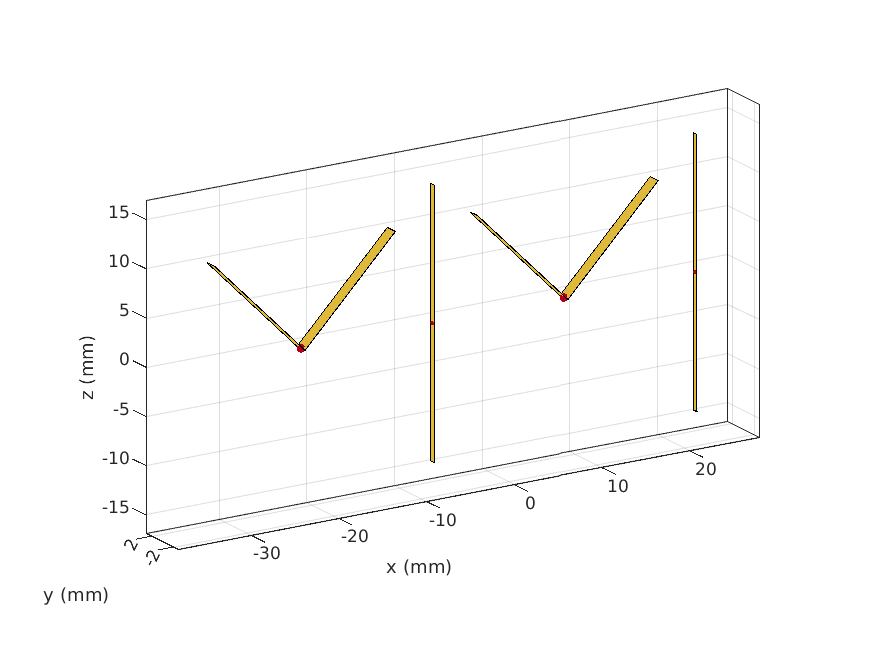}
    \caption{}
\end{subfigure}
\begin{subfigure}{0.45\textwidth} 
    \includegraphics[width=\textwidth]{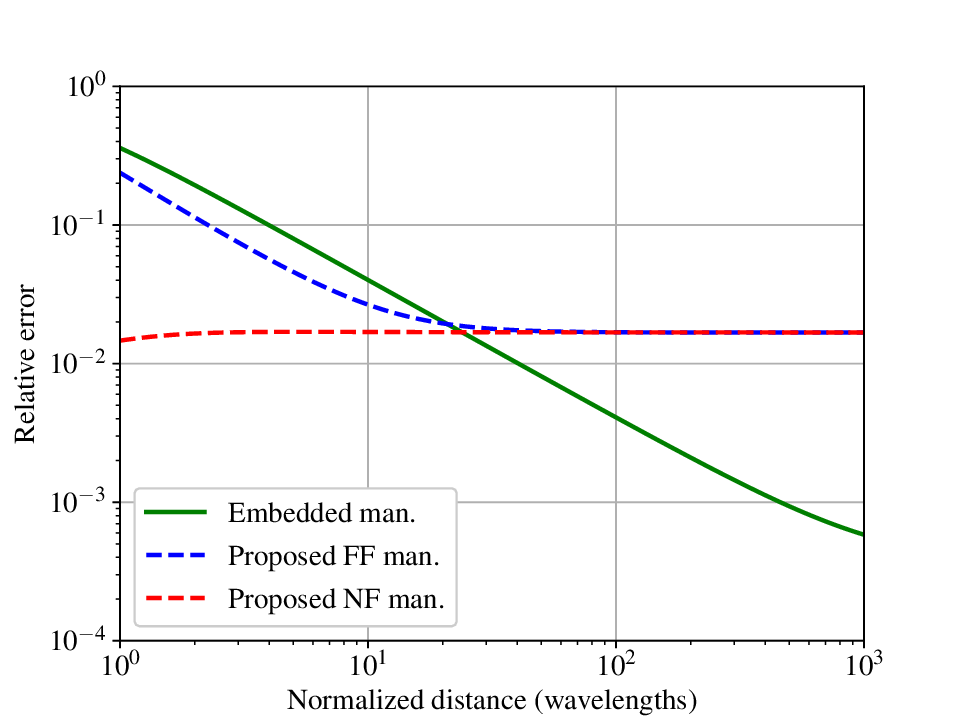}
    \caption{}
\end{subfigure} 
        
\caption{\label{fig:E-field-validation-bowtie} (a) Diagram of a heterogeneous array composed of half-wave dipoles and V-dipoles. (b) Relative error between the electric field values obtained from electromagnetic simulations and different manifold models. While the embedded manifold outperforms the proposed model at certain distances, it requires the far-field patterns of each antenna and is a more complicated characterization.}
\end{figure}

\subsection{Power density}

In the second validation experiment, we compute the plane-wave equivalent PD in the near-field of the antenna arrays. As before, we compare PD values from MATLAB and those obtained from the proposed model and the far-field benchmark. The PD value was computed as an average over 50 points uniformly distributed on a sphere of radius of two wavelengths. We further validate the model by comparing the PD for different beamforming angles. For an azimuth steering angle $\theta_{\sf st}$, we obtain the beamforming phases $\bm{\psi}(\theta_{\sf st})$ needed to steer the array in that particular direction using the MATLAB Antenna Toolbox. We then set $\bw = \text{exp}(-\cj \bm{\psi}(\theta_{\sf st}))$ in the electric field model to compute the resulting PD.

The PD validation results the two array types are displayed in Fig. \ref{fig:PD-validation}. The PD values are normalized by the maximum PD value obtained from simulations. In both cases, we note that the average PD over the sphere varies as a function of the steering angle. As before, the PD from the dipole array is well-approximated by the proposed model even at a distance of two wavelengths. In the bowtie case, the proposed model accurately predicts the PD, achieving near-perfect approximation for angles close to boresight.

  \begin{figure}
\centering
  \includegraphics[width=0.5\textwidth]{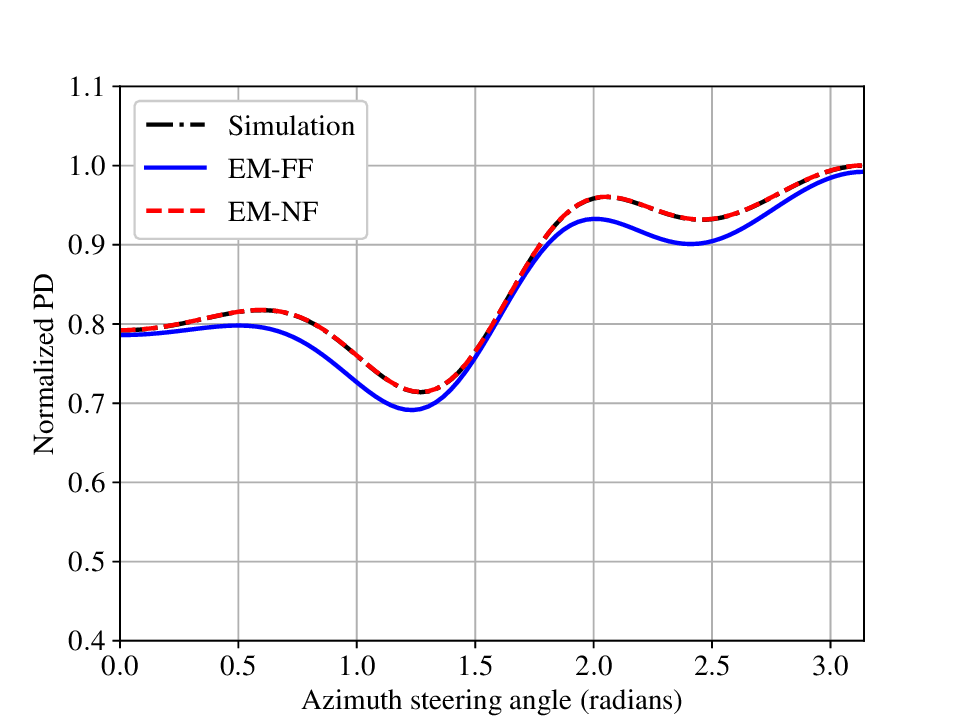}
        
\caption{\label{fig:PD-validation} PD values normalized by the maximum simulated value at different array steering angles for the dipole array. The proposed model approximates the PD well and can be used to estimate the radiated power.} 
\end{figure}

\section{Beamforming optimization}
\label{sec:beam-pattern}

We now apply the model developed in the previous section to formulate a beam pattern optimization problem under a variety of settings. We leverage the flexibility of the developed model to characterize near-field beam patterns and PD constraints. We focus mostly on the application of the model for directional beamforming, in which the weights are designed to maximize the sum of the field strength components at a particular spatial location.

\subsection{Maximizing field strength}

In the first approach, the transmitter designs $\bw$ to maximize the field strength at a point of interest to focus the signal. Imposing a power constraint of $P$ on the beamforming vector, the beam focusing problem can be formulated as
\begin{IEEEeqnarray}{rrl}
  \label{eq:ideal-opt}
  \bw_{\sf opt}(\bp) = & \, \, \argmax_{\bw} \, \, & \bw^* \Aarr^*(\bp) \Aarr(\bp) \bw, \\
  & \text{s.t.} \, \, & \norm{\bw}^2 \leq P.
\end{IEEEeqnarray}
The dependence of the solution on $\bp$ arises from the matrices $\Tarr(\bp)$ and $\Rarr(\bp)$. This is the well-known SVD beamforming solution as discussed in \cite{love2003grassmannian}. The key difference here is the use of the electromagnetic manifold, which accounts for polarization and can be leveraged at all distances from the array.

The weight vector maximizing the field amplitude can be found from the array model by noting the objective function is a Rayleigh quotient. Let $\Aarr(\bp)$ have SVD $\Aarr(\bp) = \bU \bS \bV^*$. Denoting the dominant right singular vector as $\bv_1$, the solution to \eqref{eq:ideal-opt} is $\bw_{\sf opt} =  \sqrt{P} \bv_1$. The resulting field strength is given by the dominant squared singular value $\sigma_1^2$.

\subsection{Maximizing field strength for certain polarization}

The optimization in the prior section ensures that the field strength is completely focused at a given point but ignores the polarization of the potential receiver. In this case, the array model can still be used to find the best beamforming vector. Let $\bb \in \C^3$ be a unit-norm vector that represents the receive polarization of an ideal omnidirectional antenna. A linearly polarized receiver with polarization angle $\psi$, for example, could be represented by the vector $\bb = [0,\cos \psi, \, \sin \psi]^{\T}$. The problem of finding the optimal weight vector maximize the field strength along the polarization $\bb$ is given by
\begin{IEEEeqnarray}{rrl}
  \label{eq:pol-weight-opt}
    \bw_{\bb, {\sf opt}}(\bp) = & \, \, \argmax_{ \bw} \, \, & \abs{ \bb^{*} \Aarr(\bp) \bw}^2, \\
  & \text{s.t.} \, \, & \norm{\bw}^2 \leq P.
\end{IEEEeqnarray}
Maximum ratio transmission (MRT) can be used to maximize the objective by simply setting $\bw_{\bb, {\sf opt}}(\bp) = \Aarr^*(\bp) \bb$. While this approach is simple, it reveals that the optimal weight vector is not constant as a function of the polarization $\bb$. One way to explain this result is by considering a diversely polarized array containing both vertically and horizontally polarized elements. Given a fixed transmit power, the choice of antennas that are excited will depend on whether the receiver is polarized horizontally, vertically, or along another direction.

The variation in the weight vector as a function of the receive polarization raises the notion of optimizing the receive polarization to maximize the field strength. In other words, the beam pattern synthesis problem becomes
\begin{IEEEeqnarray}{rrl}
  \label{eq:pol-weight-opt}
    (\bb_{\sf opt}(\bp), \, \bw_{{\sf opt}}(\bp)) = & \, \, \argmax_{\bw} \, \, & \abs{ \bb^{\T} \Aarr(\bp) \bw}^2, \\
  & \text{s.t.} \, \, & \norm{\bw}^2 \leq P.
\end{IEEEeqnarray}
Again, we let the SVD $\Aarr = \bU \bS \bV^*$. We also define $\bu_1$ and $\bv_1$ as the dominant left and right singular vectors. This gives the solutions $\bb_{\sf opt} = \bu_1^{\sf c}$ and $\bw_{\sf opt} = \bv_1$. As in the case of maximizing the field strength, the maximal objective value is simply the largest singular value of the decomposition.

The previous analysis reveals the following intuition of the SVD of $\Aarr = \bU \bS \bV^*$. The left singular matrix $\bU$ is associated with the \emph{inherent polarization basis} of the array. In other words, the columns of $\bU$ each represent a polarization of the radiated electric field, with the first column being the strongest polarization and the last column being the weakest. For example, a perfectly linearly polarized array would have a dominant left singular vector of $\bu_{\theta}$ or $\bu_{\phi}$. In general, we would expect the least dominant polarization to correspond with $\bu_{r}$, meaning that the field is weakest along the radial direction. The right singular matrix $\bV$ corresponds with the \emph{transmission modes} of the array. Beamforming along the manifold $\bA \bV$ represents exciting a particular polarization. The manifold dimensions dictate that the rank of $\Aarr$ is at most three, meaning there are only three modes that correspond with non-zero singular values. The manifold SVD shows how the modes can be used to synthesize any beam pattern as a linear combination of the array polarization basis.

\subsection{Beamforming with a PD constraint}

The results in Section \ref{sec:array-model} demonstrate that the plane-wave equivalent PD can be calculated in terms of the electromagnetic array manifold using \eqref{eq:pw-pd-model}. This expression can be leveraged to formulate a beam pattern synthesis problem with a near-field PD constraint. Exposure-constrained beamforming is highly relevant when designing beams for handsets and other devices that operate in close proximity to users.

We let the region $\Pcon$ denote the set of points over the PD will be constrained. For example, this could be defined as a sphere of a fixed radius around the device, or it could correspond to a surface modeling the skin of the user. The spatially-averaged PD, denoted as $\PD(\Pcon)$, can be found using \eqref{eq:pw-pd-model} as
\begin{equation}
  \label{eq:pd-avg}
  \PD(\Pcon) = \frac{1}{2 \eta_0 \abs{\Pcon}} \int_{\Pcon} \bw^* \bA^*(\bp) \bA(\bp) \bw \, d \bp
\end{equation}
We define a characteristic PD matrix $\bX_\Pcon$ for $\Pcon$ as the matrix that maps the antenna weights to the average PD over $\Pcon$. From \eqref{eq:pd-avg},
\begin{equation}
  \label{eq:char-pd-mat}
  \bX_\Pcon = \frac{1}{2 \eta_0 \abs{\Pcon}} \int_{\Pcon} \bA^*(\bp) \bA(\bp) \bw \, d \bp,
\end{equation}
which yields
\begin{equation}
  \label{eq:11}
  \PD(\Pcon) = \bw^* \bX_\Pcon \bw.
\end{equation}
As noted in prior work, this is a quadratic form in terms of the weight vector. A similar model for PD was discussed in \cite{castellanos2019signal}. There, however, the model did not include the effect of polarization and used correction terms to deal with mutual coupling and near-field effects.

We can leverage the quadratic PD model to formulate a PD-aware beam focusing problem. Let $Q$ be the regulatory constraint representing the maximum allowable PD in the region $\Pcon$. Then a PD constraint can be incorporated into the beam focusing problem in \eqref{eq:ideal-opt} as
\begin{IEEEeqnarray}{rrl}
  \label{eq:pol-pd-pow-weight-opt}
    \bw_{ {\sf opt}}(\bp) = & \, \, \argmax_{\bw} \, \, & \norm{\Aarr(\bp) \bw}^2,  \\
    & \text{s.t.} \, \, & \bw^* \bX \bw \leq Q, \nonumber \\
    & & \norm{\bw}^2 \leq P, \nonumber
\end{IEEEeqnarray}
where we have removed the dependence on $\Pcon$ for convenience. This problem reflects a realistic scenario for a user device in which the transmit power is limited to mitigate interference and the PD is regulated by exposure rules.

For simplicity of exposition, we will study a simpler problem in which only the exposure constraint is active. The solution to to the general problem in \label{eq:pol-pd-pow-weight-opt} can be obtained by exploiting the strong duality of the problem as shown in \cite{ying2015closed, BeckEldarStrongDualityNonconvexQuadratic2006}.
We have the problem
\begin{IEEEeqnarray}{rrl}
  \label{eq:pol-pd-weight-opt}
    \bw_{ {\sf opt}}(\bp) = & \, \, \argmax_{\bw} \, \, & \norm{ \Aarr(\bp) \bw}^2,  \\
    & \text{s.t.} \, \, & \bw^* \bX \bw \leq Q, \nonumber
  \end{IEEEeqnarray}
 The solution to this problem is given by the maximization of a generalized Raleigh quotient. Let $\Aarr \bX^{-1/2}$ have SVD $\Aarr \bX^{-1/2} = \bU_\bX \bS_\bX \bV_\bX^*$. Denoting the dominant right singular vector as $\bv_{\bX,1}$, the solution to \eqref{eq:pol-pd-weight-opt} is given by $\bw_{\sf opt} =  \sqrt{Q} \bX^{-1/2} \bv_{\bX,1}$.

\section{Beam pattern numerical results}
\label{sec:simulations}

In this section, we provide numerical results showing the performance of the proposed beamforming approaches in Section \ref{sec:beam-pattern}. The gain of the array is measured relative to a single half-wavelength dipole located at the origin. We use this metric because the half-wavelength dipole is both polarized and exhibits near-field effects. Let $\cbE_{\sf ref}(\bp)$ be the electric field radiated by a the reference half-wavelength dipole antenna. We then define the beamforming gain at $\bp$ as
\begin{equation}
  \label{eq:14}
  G(\bp) = \frac{\norm{\cbE_{\sf arr}(\bp)}^2}{\norm{\cbE_{\sf ref}(\bp)}^2}.
\end{equation}
When expressed in dB scale, gain relative to a dipole is generally given units dBd, and we use the same convention here. While the beamforming weights $\bw$ are optimized using the proposed manifold, the numerical gain values are obtained from electromagnetic simulations using the MATLAB Antenna Toolbox. In this manner, we show that the proposed beamforming approaches work in theory as well as in practice.

\subsection{Maximum field strength beamforming}

In the first simulation, we will use a 16 element ULA and compare the beamforming gain obtained from the proposed field strength optimization. Since the proposed optimization uses the developed electromagnetic manifold for beamforming, we use the isotropic manifold as a benchmark. Letting $\bc_n$ denote the position of the centroid of the $n$th antenna, the near-field isotropic manifold is defined as
\begin{equation}
  \ba_{\iso}(\bp) = \left[e^{-\cj \beta \Vert \bp - \bc_0 \Vert }, \dots, \, e^{-\cj \beta \Vert \bp - \bc_{N-1} \Vert }\right].
\end{equation}
The baseline algorithm uses the matched-filter weights $\bw_{\iso}(\bp) = \ba_{\iso}(\bp) / \norm{\ba_{\iso}(\bp)}$. Note that the isotropic manifold approach does not account for the polarization of the receiver without adjustments. We also compare the performance of the near-field and far-field manifold approximations discussed in Section III. We denote EM-FF as the far-field electromagnetic manifold far-field model, i.e., without the rotational coherence effect, and EM-NF as the near-field electromagnetic-manifold.

Fig. \ref{fig:dipole-beam} shows the beamforming gain of the antenna array that can be achieved by focusing using the proposed methods and the isotropic approach. In Fig. \ref{fig:dipole-beam}(a), we show the beamforming gain when the array is focused at a distance of 5 wavelengths away from the array center. All approaches show that the largest gain is achieved near the focusing distance. The electromagnetic approaches however, show a higher gain than the isotropic manifold. Interestingly, electromagnetic focusing also achieves lower gains than the isotropic approach at distances beyond the focusing distances. This behavior is akin to having lower sidelobes, which means the proposed approaches can provide both higher gain and reduced interference.

In Fig. \ref{fig:dipole-beam}(b), we show that maximum focusing gain that can be achieved at various focusing distances. In other words, each point on a curve corresponds to a point $\bp$ determined by the distance and direction. For each $\bp$, we obtain the weights $\bw$ according to \eqref{eq:ideal-opt} for the proposed approaches or according to the isotropic case. The results show that the proposed methods achieves a higher gain than isotropic beamforming.

\begin{figure}
\centering
\begin{subfigure}{0.45\textwidth}
  \includegraphics[width=\textwidth]{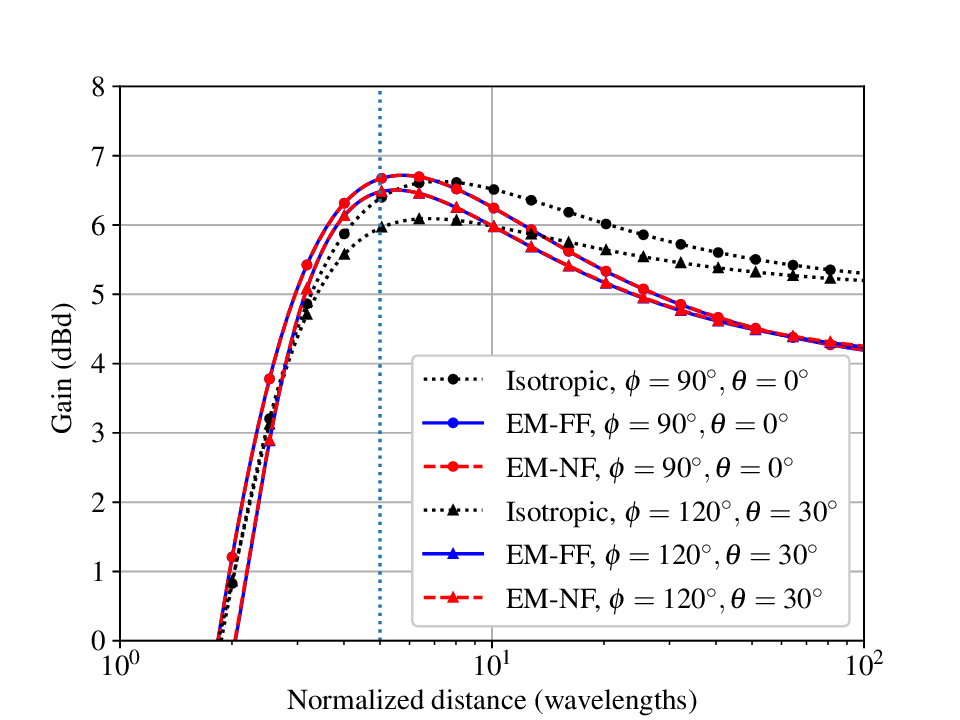}
    \caption{}
\end{subfigure}
\begin{subfigure}{0.45\textwidth}
    \includegraphics[width=\textwidth]{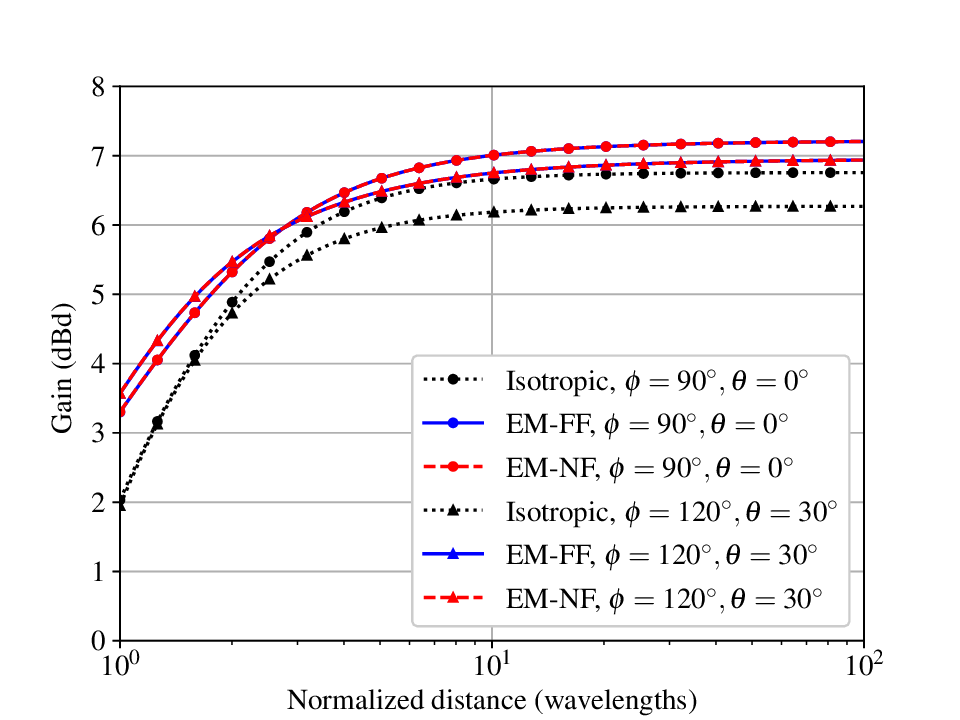}
    \caption{}
  \end{subfigure}
  \caption{\label{fig:dipole-beam} Gain of a 16-element half-wavelength dipole array v. distance. In (a), the beam is focused at a distance of 5 wavelengths away from the array. The focusing distance in (a) is shown by the dotted vertical line. The maximum gain that can be achieved through focusing throughout the distance range is shown in (b). Beamforming with the electromagnetic manifolds (EM-NF and EM-FF) shows increased gains over the isotropic approach.}
\end{figure}

\subsection{Effect of array spacing}

This experiment assesses how the distance between elements affects the efficacy of the proposed methods. We simulate a ULA array of 16 half-wavelength dipoles. One of the benefits of the proposed manifold method is that it inherently accounts for mutual coupling between array elements. The isotropic manifold can be adapted to account for mutual coupling by multiplying the length $N$ steering vector $\ba_{\iso}(\bp)$ by an $N \times N$ coupling matrix $\bC$. As noted in prior work \cite{FriedlanderMutualCouplingMatrixArray2020}, however, this approach neglects the fact that effect of mutual coupling actually depends on the point $\bp$. Here, we are more interested in how mutual coupling affects the gain of the proposed approach. For this reason, we use the unaltered isotropic manifold for comparison. 

The effect of array spacing on the array beamforming gain is show in Fig. \ref{fig:dipole-beam-spacing}. The $\sf x$-axis shows the normalized element spacing in wavelengths. At large element separations, we expect minimal coupling between the dipoles. Correspondingly, we see that the difference in gain between the electromagnetic approach and the isotropic approach is quite small in this regime. The array gain of the proposed method increases significantly over the benchmark as the array size grows over. Similar trends have been found in prior work \cite{ShyianovEtAlAchievableRateWithAntenna2022 ,IvrlacNossekTowardCircuitTheoryCommunication2010}. The results also show that the array gain trend is not monotonic. This behavior can be explained by the fact that the strength of mutual coupling waxes and wanes depending on the spacing.

\begin{figure}
  \centering %
  \includegraphics[width=0.5\textwidth]{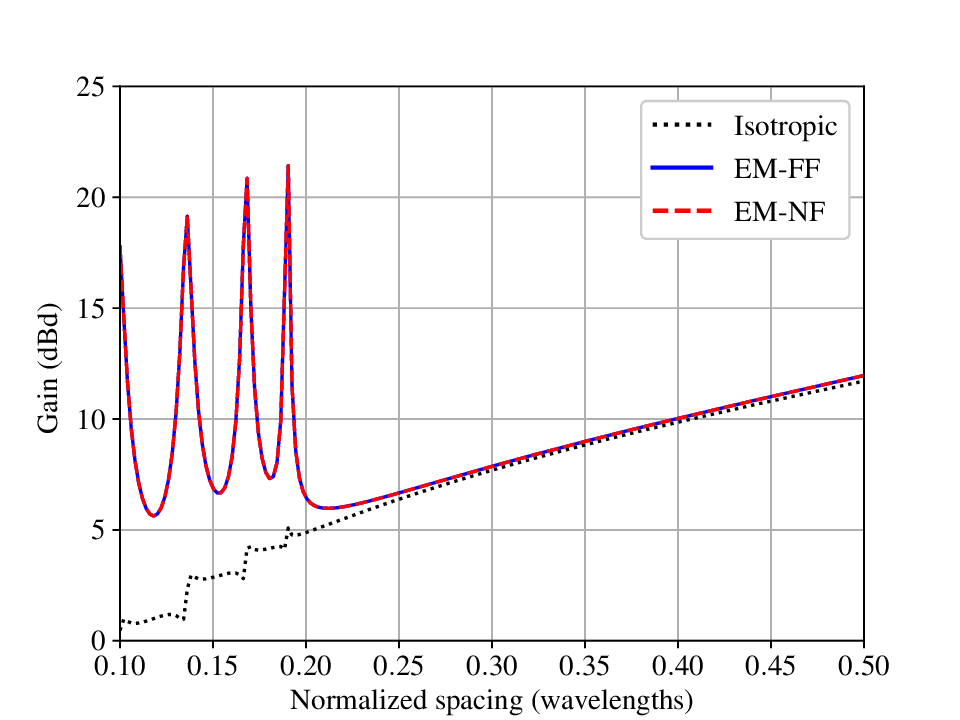}
  \caption{\label{fig:dipole-beam-spacing} Maximum focusing gain v. normalized spacing of a dipole array. The performance gap between the proposed methods and isotropic beamforming widens significantly as mutual coupling increases.}
\end{figure}

\subsection{PD-constrained beamforming}

In the final experiment, we apply a PD constraint a four-element dipole array spaced at quarter-wavelength. We simulated a scenario in which the transmitter beamforms at a distance of $100$ wavelengths in a particular direction in the far-field while under a power constraint. We consider two different kinds of constraints: a transmit power constraint and a radiated power constraint. The transmit power constraint is $\norm{\bw}^2 \leq P$, where $P$ is normalized by $1$ W. The radiated power constraint enforced a PD constraint over a uniformly sampled sphere as in the validation results. Here the radius of the sphere is taken to be one wavelength. To constrain the PD, we find $\bX$ according to the procedure in Section IV-C and normalize this matrix by the maximum average PD in the constrained region at a transmit power of 1 W. Doing so means that the constraint can be interpreted as a percentage of the maximum PD. To ensure that the PD constraint is met, we further check that this value lies below the constraint using MATLAB and reduce the transmit power as needed for compliance. In other words, if a beamformer $\bw$ achieves PD higher than the constraint, the transmitter finds an attenuation constant $\gamma$ such that $\gamma \bw$ satisfies the constraint and beamforms with the attenuated weights.

The numerical results in Fig. \ref{fig:dipole-beam-pd} compare the radiated field strength of the both the near-field and far-field models as a function of the power constraint. The proposed methods show that the radiated power constraint results in higher beamforming gains. This is because the radiated power constraint generally results in higher power beams compared to the transmit power constraint. In addition, the results show that the proposed methods achieve a significant gain over isotropic beamforming. They also show that the far-field model achieves lower gain than the near-field approach. This is because the far-field model leads to errors in calculating the power density as shown in the validation results. Those errors mean that the beamformer found from solving \eqref{eq:pol-pd-weight-opt} may not be optimal or feasible. In the case that it is not feasible, power back-off further deteriorates the performance. This demonstrates the importance of accurate PD models; better radiated power approximations lead to higher gains.

\begin{figure}
  \centering %
  \includegraphics[width=0.5\textwidth]{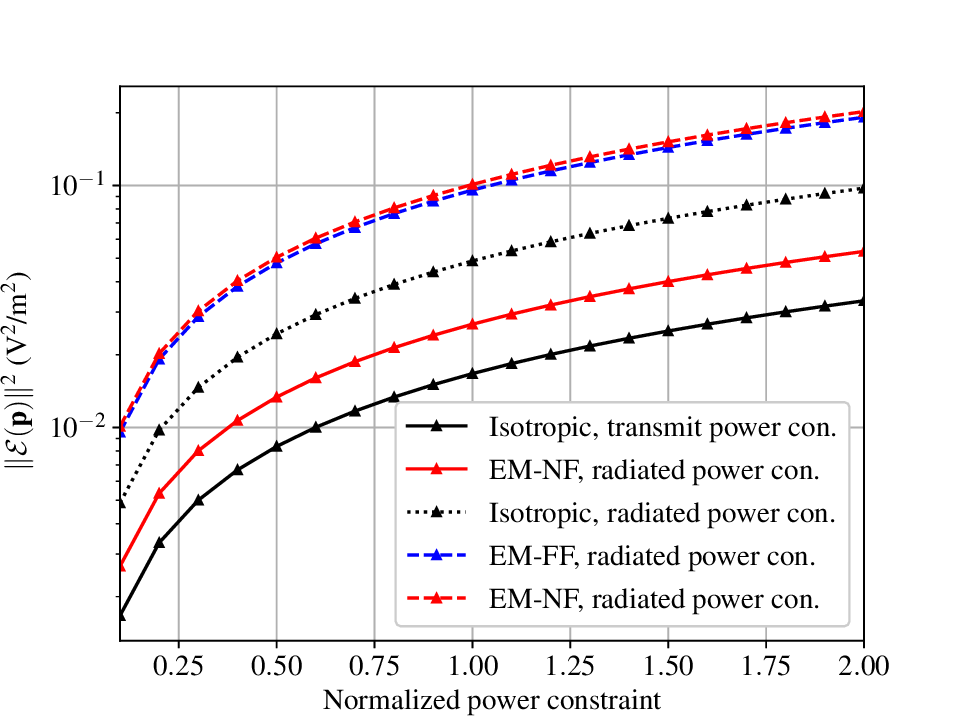
  }
  \caption{\label{fig:dipole-beam-pd} Electric field strength of a four-element half-wave dipole array under both transmit power and radiated power constraints. The radiated power constraint results in higher beamforming gains since the transmit power can be increased without violating the radiated power limit.}
\end{figure} 

\section{Conclusion}
\label{sec:conclusion}

The electromagnetic-based manifold model presented in this paper can be leveraged for a variety of important applications. By modeling arbitrary antenna arrays as a collection of Hertzian dipoles, we have developed a mathematically tractable manifold that only depends on the current distribution over the array. We showed how the manifold can be used to approximate electromagnetic fields, calculate PD, and perform near-field beamforming. The proposed approach accounts for a number of important interactions including mutual coupling, near-field propagation and polarization. We stress that the proposed model simultaneously provides a physically-consistent approach for antenna modeling and retains enough structure for the application of wireless and array processing theory. Our future research will focus on extending the results to polarized MIMO systems with heterogenous array at both ends.

\bibliographystyle{IEEEtran}
\bibliography{refs_rc,exp_refs}

\end{document}